\theoremstyle{plain}
\newtheorem{proposition}[theorem]{Proposition}
\newlength{\strutheight}
\newcommand{\cat}[1]{\mathbf{#1}}
\newcommand{\opc}[1]{#1^\mathbf{op}}
\newcommand{\id}{\mathsf{id}}
\newcommand{\im}{\mathsf{img}}
\renewcommand{\ker}{\mathsf{ker}}
\renewcommand{\dim}{\mathsf{dim}}
\newcommand{\Ps}{\mathscr{P}}
\newcommand{\N}{\mathbb{N}}
\newcommand{\F}{\mathbb{F}}
\newcommand{\eword}{\varepsilon}
\newcommand*{\circled}[1]{\tikz[baseline=(char.base)]{\node[shape=circle,draw,inner sep=1.2pt] (char) {#1};}}
\newcommand*{\smallcircled}[1]{{\tikz[baseline=(X.base)]\node(X)[draw,shape=circle,inner sep=0]{\text{\scriptsize\strut$#1$}};}}
\newcommand{\cata}{\mathsf{cat}}
\newcommand{\deriv}{\mathsf{deriv}}
\newcommand{\rch}{\mathsf{rch}}
\newcommand{\obs}{\mathsf{obs}}
\newcommand{\close}{\mathsf{close}}
\newcommand{\cons}{\mathsf{cons}}
\newcommand{\init}{\mathsf{init}}
\newcommand{\out}{\mathsf{out}}
\newcommand{\lstar}{$\mathtt{L}^{\!\star}$}
\newcommand{\ID}{$\mathtt{ID}$}
\newcommand{\lstart}{$\texttt{L}^{\!\star}$}
\newcommand{\IDt}{$\texttt{ID}$}
\newcommand{\row}{\mathsf{row}}
\newcommand{\sift}{\mathsf{sift}}
\newcommand{\lang}{\mathcal{L}}
\newcommand{\initialAlg}{\mathscr{A}}
\newcommand{\finalCoalg}{\mathscr{Z}}
\newcommand{\inc}{\mathsf{inc}}
\newcommand{\res}{\mathsf{res}}
\newcommand{\rev}{\mathsf{rev}}
\newcommand{\parspace}{\vspace{-2ex}}
\title{CALF: Categorical Automata Learning Framework\footnote{This work was partially supported by the ERC Starting Grant ProFoundNet (grant code 679127).}}
\author{Gerco van Heerdt}
\author{Matteo Sammartino}
\author{Alexandra Silva}
\affil{University College London, United Kingdom\\\texttt{\{gerco.heerdt,m.sammartino,alexandra.silva\}@ucl.ac.uk}}
\authorrunning{G. van Heerdt, M. Sammartino, and A. Silva}
\subjclass{F.1.1 Models of Computation}
\keywords{automata learning, category theory}
\begin{document}

\maketitle

\begin{abstract}
	Automata learning is a technique that has successfully been applied in verification, with the automaton type varying depending on the application domain.
	Adaptations of automata learning algorithms for increasingly complex types of automata have to be developed from scratch because there was no abstract theory offering guidelines.
	This makes it hard to devise such algorithms, and it obscures their correctness proofs.
	We introduce a simple category-theoretic formalism that provides an appropriately abstract foundation for studying automata learning.
	Furthermore, our framework establishes formal relations between algorithms for learning, testing, and minimization.
	We illustrate its generality with two examples: deterministic and weighted automata.
\end{abstract}

\section{Introduction}
Automata learning enables the use of model-based verification methods on black-box systems.
Learning algorithms have been successfully applied to find bugs in implementations of network protocols~\cite{ruiter2015}, to provide defense mechanisms against botnets~\cite{cho2010}, to rejuvenate legacy software~\cite{schuts2016}, and to learn an automaton describing the errors in a program up to a user-defined abstraction~\cite{chapman2015}.
Many learning algorithms were originally designed to learn a deterministic automaton and later, due to the demands of the verification task at hand, extended to other types of automata.
For instance, the popular \lstar{} algorithm, devised by Angluin~\cite{angluin1987}, has been adapted for various other types of automata accepting regular languages~\cite{bollig2009,angluin2015}, as well as more expressive automata, including B\"uchi-style automata~\cite{maler1995,angluin2014}, register automata~\cite{bollig2013,cassel2016}, and nominal automata~\cite{moerman2017}.
As the complexity of these automata increases, the learning algorithms proposed for them tend to become more obscure.
More worryingly, the correctness proofs become involved and harder to verify.

This paper aims to introduce an abstract framework for studying automata learning algorithms in which specific instances, correctness proofs, and optimizations can be derived without much effort.
The framework can also be used to study algorithms such as minimization and testing, showing the close connections between the seemingly different problems.
These connections also enable the transfer of extensions and optimizations among algorithms.

First steps towards a categorical understanding of active learning appeared in~\cite{jacobs2014}.
The abstract definitions provided were based on very concrete data structures used in \lstar{}, which then restricted potential generalization to study other learning algorithms and capture other data structures used by more efficient variations of the \lstar{} algorithm.
Moreover, there was no correctness proof, and optimizations and connections to minimization or testing were not explored before in the abstract setting.
In the present paper, we develop a rigorous \emph{categorical automata learning framework} (CALF) that overcomes these limitations and can be more widely applied. We start by giving a general overview of the paper and its results.

\section{Overview}\label{sec:overview}
In this section we provide an overview of CALF.
We first establish some global notation, after which we introduce the basic ingredients of active automata learning.
Finally, we sketch the structure of CALF and highlight our main results.
The reader is assumed to be familiar with elementary category theory and the theory of regular languages.
\parspace
\subparagraph{Notation.}
The set of words over an alphabet $A$ is written $A^*$; $A^{\le n}$ contains all words of length up to $n \in \N$.
We denote the empty word by $\eword$ and concatenation by $\cdot$ or juxtaposition.
We extend concatenation to sets of words $U, V \subseteq A^*$ using $U \cdot V = \{uv \mid u \in U, v \in V\}$.
Languages $\lang \subseteq A^*$ will often be represented as their characteristic functions $\lang \colon A^* \to 2$.
Given sets $X$ and $Y$, we write $Y^X$ for the set of functions $X \to Y$.
The set $1 = \{*\}$, where $*$ is an arbitrary symbol, is sometimes used to represent elements $x \in X$ of a set $X$ as functions $x \colon 1 \to X$.
We write $|X|$ for the size of a set $X$.
\parspace
\subparagraph{Active Automata Learning.}\label{sec:preautlearn}
\emph{Active} automata learning is a widely used technique to learn an automaton model of a system from observations.
It is based on direct interaction of the learner with an \emph{oracle} that can answer different types of queries about the system.
This is in contrast with \emph{passive} learning, where a fixed set of positive and negative examples is provided and no interaction with the system is possible.

Most active learning algorithms assume the ability to perform \emph{membership queries}, where the oracle is asked whether a certain word belongs to the target language $\lang$ over a finite alphabet $A$.
The algorithms we focus on use this ability to approximate the construction of the minimal DFA for $\lang$.
This construction, due to Nerode~\cite{nerode1958}, can be described as follows.
The states of the minimal DFA accepting $\lang$ are given by the image of the function $t_\lang \colon A^* \to 2^{A^*}$ defined by $t_\lang(u)(v) = \lang(uv)$, assigning to each word the residual language after reading that word.
More precisely, the minimal DFA for $\lang$ is given by the four components
\begin{align*}
	M &
		= \{t_\lang(u) \mid u \in A^*\} &
		m_0 &
		= t_\lang(\eword) \\
	F_M &
		= \{t_\lang(u) \mid u \in A^*, \lang(u) = 1\} &
		\delta_M(t_\lang(u), a) &
		= t_\lang(ua),
\end{align*}
where $M$ is a finite set of states, $\delta_M \colon M \times A \to M$ is the transition function, $F_M \subseteq M$ is the set of accepting states, and $m_0 \in M$ is the initial state.
Note that the transition function is well-defined because of the definition of $t_\lang$ and $M$ is finite because the language is regular.

Though we know $M$ is finite, computing it naively as the image of $t_\lang$ does not terminate, since the domain of $t_\lang$ is the infinite set $A^*$.
Learning algorithms approximate $M$ by instead constructing two finite sets $S, E \subseteq A^*$, which induce the function $\row \colon S \cup S \cdot A \to 2^E$ defined as $\row(u)(v) = \lang(uv)$. The name ``row'' reflects the usual representation of this function as a table with rows indexed by $S \cup S \cdot A$ and columns indexed by $E$.
This table is called an \emph{observation table}, and it induces a \emph{hypothesis} DFA given by
\begin{align*}
	H &
		= \{\row(s) \mid s \in S\} &
		h_0 &
		= \row(\eword) \\
	F_H &
		= \{\row(s) \mid s \in S, \lang(s) = 1\} &
		\delta_H(\row(s), a) &
		= \row(sa).
\end{align*}
It is often required that $\eword$ is an element of both $S$ and $E$, so that the initial and accepting states are well-defined.
For the transition function to be well-defined, the observation table is required to satisfy two properties.
These properties were introduced by Gold~\cite{gold1972}; we use the names given by Angluin~\cite{angluin1987}.
\begin{itemize}
	\item
		\textbf{Closedness} states that each transition actually leads to a state of the hypothesis.
		That is, the table is closed if for all $t \in S \cdot A$ there is an $s \in S$ such that $\row(s) = \row(t)$.
	\item
		\textbf{Consistency} states that there is no ambiguity in determining the transitions.
		That is, the table is consistent if for all $s_1, s_2 \in S$ such that $\row(s_1) = \row(s_2)$ we have $\row(s_1a) = \row(s_2a)$ for any $a \in A$.
\end{itemize}
Gold~\cite{gold1972} also explained how to update $S$ and $E$ to satisfy these properties.
If closedness does not hold, then there exists $sa \in S \cdot A$ such that $\row(s') \ne \row(sa)$ for all $s' \in S$.
This is resolved by adding $sa$ to $S$.
If consistency does not hold, then there are $s_1, s_2 \in S$, $a \in A$, and $e \in E$ such that $\row(s_1) = \row(s_2)$, but $\row(s_1a)(e) \ne \row(s_2a)(e)$.
We add $ae$ to $E$ to distinguish $\row(s_1)$ from $\row(s_2)$.
In both cases, the size of the hypothesis increases.
Since the hypothesis cannot be larger than $M$, the table will eventually be closed and consistent.

The hypothesis is an approximation.
Hence, the language it accepts may not be the target language.
Gold~\cite{gold1972} showed, based on earlier algorithms~\cite{ho1966,arbib1969}, that for any two chains of languages $S_1 \subseteq S_2 \subseteq \cdots$ and $E_1 \subseteq E_2 \subseteq \cdots$, both in the limit equaling $A^*$, the following holds.
There exists an $i \in \N$ such that all hypothesis automata derived from $S_j, E_j$ ($j \geq i$) are isomorphic to $M$.
These hypotheses are built after enforcing closedness and consistency for the sets $S_j$ and $E_j$.
Unfortunately, the convergence point $i$ is not known to the learner.
\parspace
\subparagraph{Arbib and Zeiger.}
To obtain a terminating algorithm, an additional assumption is necessary~\cite{moore1956}.
Arbib and Zeiger~\cite{arbib1969} explained the {algorithm of Ho}~\cite{ho1966} and rephrased it for DFAs.
The algorithm assumes an upper bound $n$ on the size of $M$.
One then simply takes $S = E = A^{\le n-1}$ and obtains a DFA isomorphic to $M$ as the hypothesis.
\parspace
\subparagraph{\IDt{}.}
Angluin~\cite{angluin1981} showed that, for algorithms with just this extra assumption, the number of membership queries has a lower bound that is exponential in the size of $M$.
She then introduced an algorithm called {\ID{}} that makes a stronger assumption.
It assumes a finite set $S$ of words such that each state of $M$ (that does not accept the empty language) is reached by reading a word in $S$.
Initializing $E = \{\eword\}$, the algorithm obtains $M$ up to isomorphism after making the table consistent (for a slightly stronger definition of consistency related to the sink state being kept implicit).
This makes \ID{} polynomial in the sizes of $M$, $S$, and $A$.
\parspace
\subparagraph{\lstart{}.}
Subsequently, Angluin introduced the algorithm \lstar{}~\cite{angluin1987}, which makes yet another assumption: it assumes an oracle that can test any DFA for equivalence with the target language.
In the case of a discrepancy, the oracle will provide a \emph{counterexample}, which is a word in the symmetric difference of the two languages.
\lstar{} adds the given counterexample and its prefixes to $S$.
One can show that this means that the next hypothesis will classify the counterexample correctly and that therefore the size of the hypothesis must have increased.
The algorithm is polynomial in the sizes of $M$ and $A$ and in the length of the longest counterexample.
\parspace
\subparagraph{CALF.}
\begin{wrapfigure}{r}{0.485\textwidth}
	\vspace{-10pt}
	\includegraphics[width=.485\textwidth]{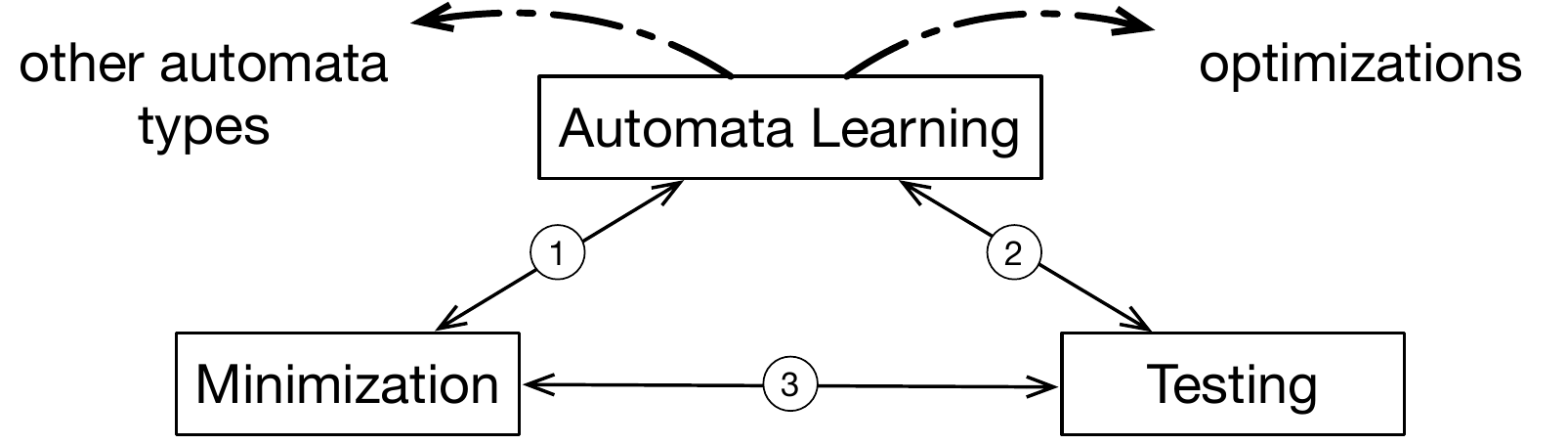}
	\caption{\em Overview of CALF.}\label{fig:calf}
	\vspace{-12pt}
\end{wrapfigure}
The framework we introduce in this paper aims to provide a formal and general account of the concepts introduced above, based on category theory.
Although the focus is on automata learning, the general perspective brought by CALF allows us to unveil deep connections between learning and algorithms for minimization and testing.

The structure of CALF is depicted in \figurename~\ref{fig:calf}.
After introducing the basic elements of the framework in Section~\ref{sec:abstract}, we use the framework in Section~\ref{sec:correctness} to give a new categorical correctness proof, encompassing all the algorithms mentioned above. Then we explore connections with minimization and testing, following the edges of the triangle in \figurename~\ref{fig:calf}:
\begin{enumerate}
		\renewcommand{\labelenumi}{\protect\circled{1}\hspace{-.15cm}}
	\item
		Section~\ref{sec:minimization} shows how our correctness proof also applies to minimization algorithms.
		In particular, we connect consistency fixing to the steps in the classical partition refinement algorithm that merge observationally equivalent states.
		Dually, reachability analysis, which is also needed to get the minimal automaton, is related to fixing closedness defects.
		\renewcommand{\labelenumi}{\protect\circled{2}\hspace{-.15cm}}
	\item
		Section~\ref{sec:testing} extends the correctness proof to an abstract result about testing and we show how it applies to algorithms such as the W-method~\cite{chow1978}, which enable testing against a black-box system.
		\renewcommand{\labelenumi}{\protect\circled{3}\hspace{-.15cm}}
	\item
		From the abstract framework and the observations in \circled{1} and \circled{2}, we also see how certain basic sets used in testing algorithms like the W-method can be obtained using generalized minimization algorithms.
\end{enumerate}
This triangular structure is on its own a contribution of the paper, tying together three important techniques that play a role in automata learning and verification.
Another strong point of CALF is the ability to seamlessly accommodate variants of learning algorithms:
\begin{itemize}
	\item
		algorithms for \emph{other types of automata}, by varying the category on which automata are defined.
		This includes weighted automata, which are defined in the category of vector spaces.
		We present details of this example in Appendix~\ref{sec:inst}.
	\item
		algorithms with \emph{optimizations}, by varying the data structure.
		Section~\ref{sec:opt} discusses \emph{classification trees}~\cite{kearns1994}, which optimize the observation tables of classical learning algorithms.
\end{itemize}
CALF also provides guidelines on how to combine these variants, and on how to obtain corresponding minimization/testing algorithms via the connections described above.
This can lead to more efficient algorithms, and, because of our abstract approach, correctness proofs can be reused.

\section{Abstract Learning}\label{sec:abstract}
In this section, we develop CALF by starting with categorical definitions that capture the basic data structures and definitions used in learning.

\subsection{Observation Tables as Approximated Algebras}
In the automata learning algorithms described in Section~\ref{sec:preautlearn}, observation tables are used to approximate the minimal automaton $M$ for the target language $\lang$.
We start by introducing a notion of approximation of an object in a category, called \emph{wrapper}.
We work in a fixed category $\cat{C}$ unless otherwise indicated.
\begin{definition}[Wrapper]
	A \emph{wrapper for an object $T$} is a pair of morphisms $w = (S \xrightarrow{\sigma} T, T \xrightarrow{\pi} P$); $T$ is called the \emph{target} of $w$.
\end{definition}
\begin{example}\label{ex:twrapper}
	The algorithms explained in Section~\ref{sec:preautlearn} work by producing subsequent approximations of $M$.
	In each approximation, we have two finite sets of words $S, E \subseteq A^*$, yielding a wrapper $w = (S \xrightarrow{\sigma} M, M \xrightarrow{\pi} 2^E)$ in $\cat{Set}$.
	Here $\sigma$ maps $s \in S$ to the state reached by $M$ after reading $s$ while $\pi$ assigns to each state of $M$ the language accepted by that state, but restricted to the words in $E$.
	Although $M$ is unknown, the composition of $\pi$ and $\sigma$ is given by $(\pi \circ \sigma)(s)(e) = \lang(se)$, which can be determined using membership queries.
	This composition is precisely the component $S \to 2^E$ of the observation table $\row\colon S \cup S \cdot A \to 2^E$.
\end{example}
We claim that the other component $S \cdot A \to 2^E$ of $\row$ is an approximated version of the transition function $\delta_M \colon M \times A \to M$, and that it can be obtained via the wrapper $w$.
More generally, we can abstract away from the structure of automata and approximate algebras for a functor $F$, i.e., pairs $(T, f \colon FT \to T)$.
Notice that $(M, \delta_M)$ is an algebra for $(-) \times A$.
\begin{definition}[Approximation of Algebras Along a Wrapper]
	Given an $F$-algebra $(T, f)$ and a wrapper $w = (S \xrightarrow{\sigma} T, T \xrightarrow{\pi} P)$, the \emph{approximation of $f$ along $w$} is $\xi^w_f = FS \xrightarrow{F\sigma} FT \xrightarrow{f} T \xrightarrow{\pi} P$.
	We write $\xi^w$ for $\xi^w_{\id} = \pi \circ \sigma$, and we may leave out superscripts if the relevant wrapper $w$ is clear from the context.
\end{definition}
We remark that our theory does not depend on a specific choice of wrappers and approximations along them.
Observation tables are just an instance.
Other data structures representing approximations of automata can also be captured, as will be shown in Section~\ref{sec:opt}.

Although the algebra $f$ may be unknown---as for instance $f = \delta_M$ is unknown to automata learning algorithms---the approximated version $\xi_f$ can sometimes be recovered via the following simple result, stating that approximations transfer along algebra homomorphisms.
\begin{proposition}\label{prop:shiftwrapper}
	For all endofunctors $F$, $F$-algebras $(U, u)$ and $(V, v)$, $F$-algebra homomorphisms $h \colon (U, u) \to (V, v)$, and morphisms $\alpha \colon S \to U$ and $\beta \colon V \to P$, $\xi^{(h \circ \alpha, \beta)}_v = \xi^{(\alpha, \beta \circ h)}_u$.
\end{proposition}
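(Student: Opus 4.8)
The plan is to unfold both sides of the claimed equation using the definition of approximation along a wrapper and then invoke that $h$ commutes with the algebra structures. Spelling out the left-hand side: the wrapper $(h \circ \alpha, \beta)$ has target $V$ and we approximate the $F$-algebra $v \colon FV \to V$, so $\xi^{(h\circ\alpha,\beta)}_v$ is the composite $FS \xrightarrow{F(h\circ\alpha)} FV \xrightarrow{v} V \xrightarrow{\beta} P$, i.e.\ $\beta \circ v \circ F(h\circ\alpha)$. Spelling out the right-hand side: the wrapper $(\alpha, \beta\circ h)$ has target $U$ and we approximate $u \colon FU \to U$, so $\xi^{(\alpha,\beta\circ h)}_u$ is the composite $FS \xrightarrow{F\alpha} FU \xrightarrow{u} U \xrightarrow{\beta\circ h} P$, i.e.\ $\beta \circ h \circ u \circ F\alpha$. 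Hence it suffices to prove that these two composites agree.

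First I would use functoriality of $F$ to rewrite $F(h\circ\alpha) = Fh \circ F\alpha$, turning the left-hand composite into $\beta \circ v \circ Fh \circ F\alpha$. Next, since $h \colon (U,u) \to (V,v)$ is an $F$-algebra homomorphism, by definition $h \circ u = v \circ Fh$. Substituting this identity into the right-hand composite $\beta \circ h \circ u \circ F\alpha$ yields $\beta \circ v \circ Fh \circ F\alpha$, which is exactly the left-hand composite. This establishes the equality, and in fact the argument is most transparently displayed as a single commuting diagram whose only non-trivial square is the homomorphism square for $h$.

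I do not expect any real obstacle: the statement is a one-step diagram chase, using nothing beyond functoriality of $F$ and the definition of an $F$-algebra homomorphism. The only point requiring a little care is bookkeeping in the notation $\xi^w_f$ for the two different wrappers --- namely that for $(h\circ\alpha,\beta)$ the target object is $V$ (and the algebra is $v$), whereas for $(\alpha,\beta\circ h)$ the target is $U$ (and the algebra is $u$) --- so that one unwinds each side to the correct composite before comparing.
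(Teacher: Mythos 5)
Your proposal is correct and follows essentially the same route as the paper's proof: unfold both sides by the definition of $\xi^w_f$, apply functoriality of $F$ to split $F(h\circ\alpha)$, and use the homomorphism identity $v \circ Fh = h \circ u$ to identify the two composites. The only cosmetic difference is that you substitute into the right-hand composite rather than rewriting the left-hand one step by step.
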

\begin{proof}
	The $F$-algebra homomorphism $h \colon (U, u) \to (V, v)$ satisfies $v \circ Fh = h \circ u$.
	Then
	\begin{align*}
		\xi_v^{(h \circ \alpha, \beta)} &
			= \beta \circ v \circ F(h \circ \alpha) &
			&
			\text{(definition of $\xi_v^{(h \circ \alpha, \beta)}$)} \\
		&
			= \beta \circ v \circ Fh \circ F\alpha &
			&
			\text{(functoriality of $F$)} \\
		&
			= \beta \circ h \circ u \circ F\alpha &
			&
			\text{($v \circ Fh = h \circ u$)} \\
		&
			= \xi_u^{(\alpha, \beta \circ h)} &
			&
			\text{(definition of $\xi_u^{(\alpha, \beta \circ h)}$)}.
		\qedhere
	\end{align*}
\end{proof}
\begin{example}
	Consider again the wrapper in Example~\ref{ex:twrapper}.
	There $\sigma = \rch \circ \inc$, where $\inc \colon S \to A^*$ is the inclusion of $S$ into $A^*$ and $\rch \colon A^* \to M$ is the full reachability map of $M$.
	If we equip $A^*$ with transitions $\cata \colon A^* \times A \to A^*$ defined as $\cata(u, a) = ua$, then $\rch$ is an algebra homomorphism $(A^*, c) \to (M, \delta)$ for $(-) \times A$.
	We can therefore apply Proposition~\ref{prop:shiftwrapper} to obtain $\xi^{(\rch \circ \inc, \pi)}_\delta = \xi^{(\inc, \pi \circ \rch)}_c$.
	It follows that $\xi_\delta \colon S \times A \to 2^E$ is given by $\xi_\delta(s, a)(e) = \lang(sae)$, which corresponds to the $S \cdot A \to 2^E$ part of the $\row$ function in Section~\ref{sec:preautlearn}.
\end{example}

\subsection{Hypotheses as Factorizations}\label{sec:hyp}
We now formalize the process of deriving a hypothesis automaton from an observation table.
Recall from Section~\ref{sec:preautlearn} that the state space of the hypothesis automaton is precisely the image of the component $S \to 2^E$ of the $\row$ function.
Image factorizations are abstractly captured by the notion of $(\mathcal{E}, \mathcal{M})$ \emph{factorization system} on a category.
We will assume throughout the paper that our category $\cat{C}$ has $(\mathcal{E}, \mathcal{M})$ factorizations.
\begin{definition}[Factorization System]\label{def:fac}
	A \emph{factorization system} is a pair $(\mathcal{E}, \mathcal{M})$ of sets of morphisms such that
	\begin{enumerate}
		\item
			all morphisms $f$ can be decomposed as $f = m \circ e$, with $m \in \mathcal{M}$ and $e \in \mathcal{E}$;
		\item
			\begin{adjustbox}{valign=T,raise=\strutheight,minipage={\linewidth}}
				\begin{wrapfigure}{r}{0.15\textwidth}
					\vspace{-42pt}
					\[
						\begin{tikzcd}[column sep=.6cm,row sep=.45cm,ampersand replacement=\&,cramped]
							U \ar[two heads]{r}{i} \ar{d}[swap]{g} \&
								V \ar{d}{h} \ar[dashed]{dl}[swap]{d} \\
							W \ar[tail]{r}{j} \&
								X
						\end{tikzcd}
					\]
				\end{wrapfigure}
				\strut
				for all commutative squares as on the right, where $i \in \mathcal{E}$ and $j \in \mathcal{M}$, there is a unique diagonal $d$ making the triangles commute;
			\end{adjustbox}\vspace{3pt}
		\item
			both $\mathcal{E}$ and $\mathcal{M}$ are both closed under composition and contain all isomorphisms;
		\item
			every morphism in $\mathcal{E}$ is an epi, and every morphism in $\mathcal{M}$ is a mono.
	\end{enumerate}
\end{definition}
We will use double-headed arrows ($\twoheadrightarrow$) to indicate morphisms in $\mathcal{E}$ and arrows with a tail ($\rightarrowtail$) to indicate morphisms in $\mathcal{M}$.
For instance, in $\cat{Set}$ the pair (surjective functions, injective functions) forms a factorization system, where each function $f \colon X \to Y$ can be decomposed as a surjective map $X \twoheadrightarrow \im(f)$ followed by the inclusion $\im(f) \rightarrowtail Y$.
\renewcommand*\windowpagestuff{%
	\vspace{-8pt}
	\[
		\begin{tikzcd}[column sep=.6cm,row sep=.2cm,ampersand replacement=\&]
			S \ar{r}{\sigma} \ar[bend right,two heads]{dr}{e} \& T \ar{r}{\pi} \&
				P \\
			\&
				H \ar[bend right,tail]{ur}{m}
		\end{tikzcd}
	\]
}
\begin{definition}[Wrapper Minimization]
	\begin{cutout}{0}{\dimexpr\linewidth-3.7cm\relax}{0pt}{3}
		The \emph{minimization of a wrapper $(\sigma, \pi)$} is the $(\mathcal{E}, \mathcal{M})$-factorization of $\pi \circ \sigma$, depicted on the right.
		Notice that $(e, m)$ is a wrapper for $H$.
	\end{cutout}
\end{definition}
In the wrapper of Example~\ref{ex:twrapper}, $H$ is the state space of the hypothesis automaton defined in Section~\ref{sec:preautlearn}.
However, $H$ does not yet have an automaton structure.
In the concrete setting, this can be computed from the $S \cdot A \to 2^E$ part of $\row$ whenever closedness and consistency are satisfied.
We give our abstract characterization of these requirements, which generalize the definitions due to Jacobs and Silva~\cite{jacobs2014}.
\renewcommand*\windowpagestuff{%
	\vspace{-8pt}
	\begin{equation}\label{eq:cc}
		\begin{tikzcd}[column sep=.7cm,row sep=.7cm,ampersand replacement=\&]
			FS \ar{r}{Fe} \ar[dashed]{d}[swap]{\close^w_f} \ar{dr}{\xi_f^w} \&
				FH \ar[dashed]{d}{\cons_f^w} \\
			H \ar{r}[swap]{m} \&
				P
		\end{tikzcd}
	\end{equation}
}
\begin{definition}[Closedness and Consistency]
	\begin{cutout}{2}{\dimexpr\linewidth-4.4cm\relax}{0pt}{4}
		Let $w = (S \xrightarrow{\sigma} T, T \xrightarrow{\pi} P)$ be a wrapper and $(S \xrightarrow{e} H, H \xrightarrow{m} P)$ its minimization.
		Given an endofunctor $F$ and a morphism $f \colon FT \to T$, we say that $w$ is \emph{$f$-closed} if there is a morphism $\close^w_f \colon FS \to H$ making the left triangle in (\ref{eq:cc}) commute; we say that $w$ is \emph{$f$-consistent} if there is a morphism $\cons^w_f \colon FH \to P$ making the right triangle commute.
	\end{cutout}
	\label{def:clos-cons}
\end{definition}
Closedness and consistency together yield an algebra structure on the hypothesis.
This is not just any algebra, but one that relates the wrapper to its minimization, as we show next.
\renewcommand*\windowpagestuff{%
	\vspace{-10pt}
	\begin{equation}\label{eq:struct}
		\begin{tikzcd}[column sep=.8cm,row sep=.4cm,ampersand replacement=\&]
			FT \ar{d}[swap]{f} \&
				FS \ar{l}[swap]{F\sigma} \ar[two heads]{r}{Fe} \&
				FH \ar[dashed]{d}{\theta^w_f} \\
			T \ar{r}{\pi} \&
				P \&
				H \ar[tail]{l}[swap]{m}
		\end{tikzcd}
	\end{equation}
}
\begin{theorem}\label{thm:dynamical}
	\begin{cutout}{1}{\dimexpr\linewidth-5.6cm\relax}{0pt}{4}
		Let $w = (S \xrightarrow{\sigma} T, T \xrightarrow{\pi} P)$ be a wrapper and $(S \xrightarrow{e} H, H \xrightarrow{m} P)$ its minimization.
		For each $\mathcal{E}$-preserving endofunctor $F$ and each morphism $f \colon FT \to T$, $w$ is both $f$-closed and $f$-consistent if and only if there exists $\theta^w_f \colon FH \to H$ making the diagram on the right commute.
	\end{cutout}
\end{theorem}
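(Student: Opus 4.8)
The plan is to prove the two implications separately. In both directions the content is the observation that the diagram (\ref{eq:struct}) commutes precisely when $m \circ \theta^w_f \circ Fe = \pi \circ f \circ F\sigma = \xi^w_f$, so the task reduces to relating this single equation to the two triangles of (\ref{eq:cc}); the only nontrivial tool needed is the diagonal fill-in axiom of the factorization system.

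For the ``if'' direction I would assume a morphism $\theta \colon FH \to H$ with $m \circ \theta \circ Fe = \xi^w_f$ and simply read off witnesses: put $\close^w_f := \theta \circ Fe$ and $\cons^w_f := m \circ \theta$. A one-line check in each case shows that the left and right triangles of (\ref{eq:cc}) commute, since $m \circ \close^w_f = m \circ \theta \circ Fe = \xi^w_f$ and $\cons^w_f \circ Fe = m \circ \theta \circ Fe = \xi^w_f$.

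For the ``only if'' direction I would start from given $\close^w_f \colon FS \to H$ and $\cons^w_f \colon FH \to P$ with $m \circ \close^w_f = \xi^w_f$ and $\cons^w_f \circ Fe = \xi^w_f$. Together these two equations say exactly that the square with top edge $Fe$, left edge $\close^w_f$, right edge $\cons^w_f$, and bottom edge $m$ commutes. Now I would invoke the factorization system: since $e \in \mathcal{E}$ and $F$ is $\mathcal{E}$-preserving we have $Fe \in \mathcal{E}$, while $m \in \mathcal{M}$, so there is a (unique) diagonal $\theta^w_f \colon FH \to H$ with $\theta^w_f \circ Fe = \close^w_f$ and $m \circ \theta^w_f = \cons^w_f$. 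Composing the second equation on the right with $Fe$ gives $m \circ \theta^w_f \circ Fe = \cons^w_f \circ Fe = \xi^w_f$, which is exactly commutativity of (\ref{eq:struct}).

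The main obstacle is essentially recognising that the closedness and consistency data fit together into a commuting square of exactly the shape required by the diagonal axiom; once that is seen, everything else is bookkeeping. The one place the hypothesis on $F$ enters is to guarantee $Fe \in \mathcal{E}$, which is what makes the lift available, and I would flag this explicitly. I would also note in passing that uniqueness of the diagonal makes the resulting algebra structure $\theta^w_f$ canonical, although uniqueness is not needed for the statement as phrased.
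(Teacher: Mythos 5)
Your proposal is correct and follows exactly the paper's own argument: the two triangles of (\ref{eq:cc}) assemble into a commuting square with $Fe \in \mathcal{E}$ and $m \in \mathcal{M}$, the diagonal fill-in yields $\theta^w_f$, and conversely $\close^w_f = \theta^w_f \circ Fe$ and $\cons^w_f = m \circ \theta^w_f$ recover the witnesses. Your write-up just spells out the bookkeeping that the paper leaves implicit.
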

\begin{proof}
	Given the morphisms $\close_f$ and $\cons_f$, we let $\theta_f$ be the unique diagonal provided by the factorization system for the commutative square (\ref{eq:cc}); conversely, given $\theta_f$, we take $\close_f = \theta_f \circ Fe$ and $\cons_f = m \circ \theta_f$.
\end{proof}
To understand abstract closedness and consistency properties in $\cat{Set}$, and to relate them to the concrete definitions given in Section~\ref{sec:preautlearn}, we give the following general result.
\begin{proposition}\label{prop:closedcons}
	Consider functions $f$, $g$, and $h$ as in the diagram below, with $f$ surjective.
	\begin{enumerate}
		\item
			\begin{adjustbox}{valign=T,raise=\strutheight,minipage={\linewidth}}
				\begin{wrapfigure}{r}{0.2\textwidth}
					\vspace{-26pt}
					\[
						\hspace{-8pt}\begin{tikzcd}[column sep=.8cm,row sep=.3cm,ampersand replacement=\&]
							U \ar[two heads,pos=.4]{r}{f} \ar{rd}{g} \ar[dashed]{d}[swap,pos=.25]{i} \&
								V \ar[dashed,pos=.25]{d}{j} \\
							W \ar[pos=.4]{r}{h} \&
								X
						\end{tikzcd}
					\]
				\end{wrapfigure}\strut
				A function $i \colon U \to W$ making the left triangle in the diagram commute exists if and only if for each $u \in U$ there is a $w \in W$ such that $h(w) = g(u)$.
			\end{adjustbox}\vspace{3pt}
		\item
			A function $j \colon V \to X$ making the right triangle commute exists if and only if for all $u_1, u_2 \in U$ such that $f(u_1) = f(u_2)$ we have $g(u_1) = g(u_2)$.
	\end{enumerate}
\end{proposition}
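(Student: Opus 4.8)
The plan is to verify both biconditionals by direct elementary reasoning about functions in $\cat{Set}$; neither part requires the factorization-system machinery, only the set-theoretic content of ``surjective'' and the fact that a function is determined by its values. I would treat the two parts, and the two directions of each, separately.

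For part~(1), the forward direction is immediate: if $i$ satisfies $h \circ i = g$, then for every $u \in U$ the element $w = i(u)$ witnesses $h(w) = g(u)$. For the converse, I would use the hypothesis to choose, for each $u \in U$, some $w_u \in W$ with $h(w_u) = g(u)$, and define $i(u) = w_u$; then $h \circ i = g$ holds by construction. The only subtlety worth flagging is that selecting the $w_u$ invokes the axiom of choice; equivalently, the hypothesis says exactly that the image of $g$ is contained in the image of $h$, so one may phrase the argument in terms of a right inverse to the corestriction of $h$. Note that surjectivity of $f$ plays no role in this part.

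For part~(2), the forward direction is again immediate: if $j \circ f = g$ and $f(u_1) = f(u_2)$, then $g(u_1) = j(f(u_1)) = j(f(u_2)) = g(u_2)$. For the converse, surjectivity of $f$ is precisely what is needed: every $v \in V$ equals $f(u)$ for some $u \in U$, so I would define $j(v) = g(u)$ for such a choice of $u$. Well-definedness of $j$ is exactly the stated condition that $g$ is constant on each fibre of $f$ (if $f(u) = f(u') = v$ then $g(u) = g(u')$), after which $j \circ f = g$ holds by definition. One may additionally observe that this $j$ is in fact the unique such function, since $f$, being surjective, is an epimorphism.

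There is no genuine obstacle here; the one thing to handle carefully is making the nonconstructive and surjectivity hypotheses explicit where they are used — the axiom of choice in the backward direction of part~(1), and surjectivity of $f$ in the backward direction of part~(2) — since these are exactly what make the two existence claims go through.
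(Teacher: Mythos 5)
Your proof is correct and follows essentially the same route as the paper's: the forward directions are read off by composing, and the converses define $i$ by choosing a preimage under $h$ and $j$ by $j(f(u)) = g(u)$ with well-definedness being exactly the stated condition. The remarks on the axiom of choice, the irrelevance of surjectivity in part~(1), and the uniqueness of $j$ are accurate additions but do not change the argument.
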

\begin{proof}
	\begin{enumerate}
		\item
			If such an $i$ exists, then for each $u \in U$ we have $h(i(u)) = g(u)$.
			Conversely, define $i(u)$ to be any $w \in W$ such that $h(w) = g(u)$, which exists by assumption.
			Then $(h \circ i)(u) = h(w) = g(u)$.
		\item
			If such a $j$ exists, then whenever $f(u_1) = f(u_2)$ we have $(j \circ f)(u_1) = (j \circ f)(u_2)$ and therefore $g(u_1) = g(u_2)$.
			Conversely, define $j(f(u))$ to be $g(u)$, using that $f$ is surjective.
			We only need to check that this is well-defined; i.e., that whenever $f(u_1) = f(u_2)$ we also have $g(u_1) = g(u_2)$.
			This is precisely the assumption.
			\qedhere
	\end{enumerate}
\end{proof}

For the wrapper in Example~\ref{ex:twrapper} and $F = (-) \times A$, $\delta_M$-closedness and $\delta_M$-consistency coincide respectively with closedness and consistency as defined in Section~\ref{sec:preautlearn}.
If these are satisfied, then the function $\theta_{\delta_M}$ is precisely the transition function $\delta_H$ defined there. Notice that all endofunctors on $\cat{Set}$ preserve surjections.

Interestingly, closedness and consistency also tell us when initial and accepting states of $H$ can be derived from the observation table. For initial states, we note that the initial state $m_0 \in M$ can be seen as a function $m_0 \colon 1 \to M$.
As the set $1$ is the constant functor $1$ applied to $M$, this initial state gives rise to another closedness property, which states that there must be an $s \in S$ such that $\xi(s)(e) = \lang(e)$ for all $e \in E$.
Note that $m_0$-consistency trivially holds because of the constant functor involved.
When $m_0$-closedness is satisfied, $\close_{m_0} \colon 1 \to H$ is an initial state map.
For instance, in \lstar{} this property is always satisfied because $\eword \in S$.

For accepting states, one would expect a similar property regarding the set $F_M \subseteq M$, which can be represented by a function $F_M \colon M \to 2$.
However, this is a coalgebra (for the constant functor $2$) rather than an algebra.
Fortunately, a wrapper $(S \xrightarrow{\sigma} T, T \xrightarrow{\pi} P)$ in $\cat{C}$ gives a wrapper $(\pi, \sigma)$ in $\opc{\cat{C}}$, so a morphism $f \colon T \to FT$ in $\cat{C}$ yields the approximation $\xi^{(\pi, \sigma)}_f = S \xrightarrow{\sigma} T \xrightarrow{f} FT \xrightarrow{F\pi} FP$.
In particular, for $F_M \colon M \to 2$, this leads to a consistency\footnote{%
	Technically, this should be called coclosedness, as it is closedness in the category $\opc{\cat{C}}$.
	We choose to overload consistency so as not to obscure the terminology.
} property stating that for all $s_1, s_2 \in S$ such that $\xi(s_1) = \xi(s_2)$ we must have $\lang(s_1) = \lang(s_2)$.
The \lstar{} algorithm ensures this by having $\eword \in E$, using that $\lang(s) = \xi(s)(\eword)$ for any $s \in S$.

\section{A General Correctness Theorem}\label{sec:correctness}
In this section we work towards a general correctness theorem that is completed in Section~\ref{sec:automata}, where we introduce an abstract notion of automaton.
We then show in Section~\ref{sec:applications} how the theorem applies to the \ID{} algorithm, to the algorithm by Arbib and Zeiger, and to \lstar{}.

First, we need the following result.
\renewcommand*\windowpagestuff{%
	\vspace{-12pt}
	\[
		\hspace{-4pt}\begin{tikzcd}[row sep=.4cm,column sep=1.2cm,ampersand replacement=\&]
			U \ar[two heads]{r}{e} \ar{dd}[swap]{\id} \&
				I \ar{d}{m} \ar[dashed]{ddl}[swap]{d} \\
				\&
				V \ar{d}{g} \\
			U \ar[tail]{r}{g \circ f} \&
				W
		\end{tikzcd}
	\]
}
\begin{proposition}\label{prop:uncomp}
	If $f \colon U \to V$ and $g \colon V \to W$ are such that $g \circ f \in \mathcal{M}$, then $f \in \mathcal{M}$.
\end{proposition}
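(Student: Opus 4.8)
The plan is to use the diagonal fill-in axiom of the factorization system applied to a carefully chosen commutative square, essentially the one drawn next to the statement. First I would take an $(\mathcal{E}, \mathcal{M})$-factorization $f = m \circ e$ of $f$, with $e \colon U \to I$ in $\mathcal{E}$ and $m \colon I \to V$ in $\mathcal{M}$. Then $g \circ f = (g \circ m) \circ e$, so the square with top edge $e$, left edge $\id_U$, right edge $g \circ m$, and bottom edge $g \circ f$ commutes, since $(g \circ m) \circ e = g \circ f = (g \circ f) \circ \id_U$. By hypothesis the bottom edge $g \circ f$ lies in $\mathcal{M}$, and the top edge $e$ lies in $\mathcal{E}$, so this square has exactly the shape required by Definition~\ref{def:fac}(2).

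Next I would extract the unique diagonal $d \colon I \to U$ provided by that axiom. The upper triangle of the fill-in gives $d \circ e = \id_U$, so $e$ is a split monomorphism with retraction $d$. Since every morphism in $\mathcal{E}$ is an epimorphism by Definition~\ref{def:fac}(4), I can cancel $e$ on the right in the identity $e \circ d \circ e = e = \id_I \circ e$ to obtain $e \circ d = \id_I$. Hence $e$ is an isomorphism with inverse $d$.

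Finally, since $\mathcal{M}$ contains all isomorphisms and is closed under composition by Definition~\ref{def:fac}(3), from $e \in \mathcal{M}$ and $m \in \mathcal{M}$ we conclude $f = m \circ e \in \mathcal{M}$, as desired. I do not expect a genuine obstacle here: the only point needing a moment's care is choosing the square so that the hypothesis $g \circ f \in \mathcal{M}$ occupies the $\mathcal{M}$-edge while the $\mathcal{E}$-part $e$ of the factorization occupies the $\mathcal{E}$-edge; after that the epi-cancellation argument is routine, and in particular the lower triangle of the fill-in is not needed.
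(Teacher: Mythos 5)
Your proposal is correct and follows essentially the same argument as the paper: factor $f = m \circ e$, use the diagonal fill-in on the square with $e$ on top and $g \circ f$ on the bottom to get a retraction $d$ of $e$, cancel $e$ as an epi to conclude $e$ is an isomorphism, and then use closure of $\mathcal{M}$ under composition with isomorphisms.
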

\begin{proof}
	\begin{cutout}{0}{\dimexpr\linewidth-3.4cm\relax}{0pt}{5}
		Factorize $f = m \circ e$, with $e \in \mathcal{E}$ and $m \in \mathcal{M}$, and consider the unique diagonal $d$ obtained in the commutative square on the right.
		We see that $d \circ e = \id$.
		Then $e \circ d \circ e = e$, and therefore $e \circ d = \id$ because $e$ is epic.
		Thus, $e$ is an isomorphism.
		Since $\mathcal{M}$ is closed under composition with isomorphisms, we conclude that $f \in \mathcal{M}$.
	\end{cutout}
	\vspace{-14pt}
\end{proof}

\begin{wrapfigure}{r}{0.42\textwidth}
	\vspace{-24pt}
	\begin{align}\label{eq:phipsi}
		\hspace{-8pt}\begin{gathered}
			\begin{tikzcd}[ampersand replacement=\&]
				S \ar[two heads]{r}{\sigma} \ar[two heads]{d}[swap]{e} \&
					T \ar{d}{\pi} \ar[dashed]{dl}[swap]{\phi} \\
				H \ar[tail]{r}{m} \&
					P
			\end{tikzcd}
		\end{gathered} &
			&
			\begin{gathered}
				\begin{tikzcd}[ampersand replacement=\&]
					S \ar[two heads]{r}{e} \ar{d}[swap]{\sigma} \&
						H \ar[tail]{d}{m} \ar[dashed]{dl}[swap]{\psi} \\
					T \ar[tail]{r}{\pi} \&
						P
				\end{tikzcd}
			\end{gathered}
	\end{align}
	\vspace{-20pt}
\end{wrapfigure}
The key observation for the correctness theorem is the following.
Let $w = (S \xrightarrow{\sigma} T, T \xrightarrow{\pi} P)$ be a wrapper with minimization $(S \xrightarrow{e} H, H \xrightarrow{m} P)$.
If $\sigma \in \mathcal{E}$, then the factorization system gives us a unique diagonal $\phi$ in the left square of (\ref{eq:phipsi}), which by (the dual of) Proposition~\ref{prop:uncomp} satisfies $\phi \in \mathcal{E}$.
Similarly, if $\pi \in \mathcal{M}$, we have $\psi$ in the right square of (\ref{eq:phipsi}), with $\psi \in \mathcal{M}$.
Composing the two diagrams and using again the unique diagonal property, one sees that $\phi$ and $\psi$ must be mutually inverse.
We can conclude that $H$ and $T$ are isomorphic.
Now, if $w$ is a wrapper produced by a learning algorithm and $T$ is the state space of the target minimal automaton, as in Example~\ref{ex:twrapper}, our reasoning hints at a correctness criterion: $\sigma \in \mathcal{E}$ and $\pi \in \mathcal{M}$ upon termination ensure that $H$ is isomorphic to $T$.
Of course, the criterion will have to guarantee that the automata, not just the state spaces, are isomorphic.

We first show that the argument above lifts to $F$-algebras $f \colon FT \to T$, for an arbitrary endofunctor $F \colon \cat{C} \to \cat{C}$ preserving $\mathcal{E}$.
\begin{lemma}\label{lem:clco}
	For a wrapper $w = (\sigma, \pi)$ and an $F$-algebra $f$, if $\sigma \in \mathcal{E}$, then $w$ is $f$-closed; if $\pi \in \mathcal{M}$, then $w$ is $f$-consistent.
\end{lemma}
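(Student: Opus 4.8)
The plan is to reuse the two diagonals $\phi$ and $\psi$ already produced in~(\ref{eq:phipsi}) and to notice that they are precisely what is needed to factor the approximation $\xi_f^w = \pi \circ f \circ F\sigma$ through the minimization $(S \xrightarrow{e} H, H \xrightarrow{m} P)$ of $w$. Recall that the square $m \circ e = \pi \circ \sigma$ commutes, so when $\sigma \in \mathcal{E}$ the factorization system gives a (unique) diagonal $\phi \colon T \to H$ with $\phi \circ \sigma = e$ and $m \circ \phi = \pi$, and when $\pi \in \mathcal{M}$ it gives a (unique) diagonal $\psi \colon H \to T$ with $\psi \circ e = \sigma$ and $\pi \circ \psi = m$.

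First, for the closedness part I would assume $\sigma \in \mathcal{E}$ and set $\close_f^w := \phi \circ f \circ F\sigma \colon FS \to H$. Then $m \circ \close_f^w = m \circ \phi \circ f \circ F\sigma = \pi \circ f \circ F\sigma = \xi_f^w$, which is exactly commutativity of the left triangle of~(\ref{eq:cc}), so $w$ is $f$-closed. Dually, for the consistency part I would assume $\pi \in \mathcal{M}$ and set $\cons_f^w := \pi \circ f \circ F\psi \colon FH \to P$. Then, using functoriality of $F$ and $\psi \circ e = \sigma$, we get $\cons_f^w \circ Fe = \pi \circ f \circ F(\psi \circ e) = \pi \circ f \circ F\sigma = \xi_f^w$, which is commutativity of the right triangle of~(\ref{eq:cc}), so $w$ is $f$-consistent.

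I do not anticipate a genuine obstacle here: once $\phi$ and $\psi$ are available, the two verifications are one-line diagram chases. It is worth noting that, unlike Theorem~\ref{thm:dynamical}, this argument does not use $\mathcal{E}$-preservation of $F$ at all --- only that $F$ is a functor --- since the work is done entirely by the hypotheses $\sigma \in \mathcal{E}$ and $\pi \in \mathcal{M}$, which are exactly what makes $\phi$ and $\psi$ exist. The one point to stay careful about is matching the orientation of the triangles in~(\ref{eq:cc}) to the definition of $f$-closedness ($m \circ \close_f = \xi_f$) and $f$-consistency ($\cons_f \circ Fe = \xi_f$), which is routine.
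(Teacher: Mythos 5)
Your proposal is correct and is essentially identical to the paper's proof, which likewise defines $\close_f = \phi \circ f \circ F\sigma$ when $\sigma \in \mathcal{E}$ and $\cons_f = \pi \circ f \circ F\psi$ when $\pi \in \mathcal{M}$; you merely spell out the one-line verifications that the paper leaves implicit. Your side remark that $\mathcal{E}$-preservation of $F$ is not needed here (unlike in Theorem~\ref{thm:dynamical}) is also accurate.
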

\begin{proof}
	If $\sigma \in \mathcal{E}$, then let $\phi$ be as in (\ref{eq:phipsi}) and define $\close_f = \phi \circ f \circ F\sigma$; if $\pi \in \mathcal{M}$, then let $\psi$ be as in (\ref{eq:phipsi}) and define $\cons_f = \pi \circ f \circ F\psi$.
\end{proof}
\renewcommand*\windowpagestuff{%
	\vspace{-.5cm}
	\[
		\hspace{.2cm}
		\begin{gathered}
			\begin{tikzcd}[column sep=.9cm,row sep=.45cm,ampersand replacement=\&]
				FT \ar{r}{f} \&
					T \ar{r}{\phi} \ar{dr}[pos=.35,swap]{\pi} \&
					H \ar[tail]{d}{m} \\
				FS \ar{rr}[pos=.4]{\xi_f} \ar[two heads]{dd}[swap]{F\sigma} \ar{ddr}[swap,pos=.3]{Fe} \ar[bend left=8]{ddrr}[pos=.75]{\close_f} \ar[two heads]{u}[pos=.3]{F\sigma} \ar[phantom,bend left=4,pos=.4]{ru}{\smallcircled{1}} \ar[phantom,bend right=20,pos=.94]{rru}{\smallcircled{2}} \&
					\&
					P \\
				\\
				FT \ar{r}[swap]{F\phi} \ar[phantom,pos=.1]{rruu}{\smallcircled{3}} \ar[phantom,bend right=5,pos=.45]{rruu}{\smallcircled{4}} \ar[phantom,bend left=5,pos=.7]{rruu}{\smallcircled{5}} \&
					FH \ar{r}[swap]{\theta_f} \&
					H \ar[tail]{uu}[swap]{m}
			\end{tikzcd}
		\end{gathered}
		\quad\hspace{-2pt}
		\begin{array}{ll}
			\smallcircled{1} \text{ definition of $\xi_f$} \\
			\smallcircled{2} \text{ (\ref{eq:phipsi})} \\
			\smallcircled{3} \text{ functoriality, (\ref{eq:phipsi})} \\
			\smallcircled{4} \text{ definition of $\theta_f$} \\
			\smallcircled{5} \text{ closedness}
		\end{array}
	\]
}
\begin{proposition}\label{prop:hyphom}
	For a wrapper $w = (\sigma, \pi)$ and an $F$-algebra $f$, if $\sigma \in \mathcal{E}$ and $w$ is $f$-consistent, then $\phi$ as given in (\ref{eq:phipsi}) is an $F$-algebra homomorphism $(T, f) \to (H, \theta_f)$; if $\pi \in \mathcal{M}$ and $w$ is $f$-closed, then $\psi$ as given in (\ref{eq:phipsi}) is an $F$-algebra homomorphism $(H, \theta_f) \to (T, f)$.
\end{proposition}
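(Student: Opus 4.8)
The plan is to verify each of the two homomorphism equations by a short diagram chase, exploiting that the $\mathcal{E}$-preserving functor $F$ sends $\sigma \in \mathcal{E}$ and $e \in \mathcal{E}$ to epimorphisms $F\sigma$ and $Fe$, while $\pi$ and $m$ lie in $\mathcal{M}$ and are therefore monic. First I would record that $\theta_f$ is in fact available in both cases. In the first, $\sigma \in \mathcal{E}$ makes $w$ be $f$-closed by Lemma~\ref{lem:clco}, and together with the assumed $f$-consistency Theorem~\ref{thm:dynamical} yields $\theta_f$; in the second, $\pi \in \mathcal{M}$ makes $w$ be $f$-consistent by the same lemma, and with the assumed $f$-closedness Theorem~\ref{thm:dynamical} again applies. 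When invoking Theorem~\ref{thm:dynamical} I would feed it precisely the witnesses $\close_f = \phi \circ f \circ F\sigma$ and $\cons_f = \pi \circ f \circ F\psi$ produced in the proof of Lemma~\ref{lem:clco}; this choice is immaterial, since $m$ being monic and $Fe$ being epic force the closedness and consistency witnesses to be uniquely determined by the triangles of (\ref{eq:cc}), so $\theta_f$ does not depend on it.

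For the first statement I want $\theta_f \circ F\phi = \phi \circ f$ as morphisms $FT \to H$. Since $F\sigma$ is epic, it suffices to prove this after precomposing with $F\sigma$. On the one hand, $\theta_f \circ F\phi \circ F\sigma = \theta_f \circ F(\phi \circ \sigma) = \theta_f \circ Fe$ by functoriality and the triangle $\phi \circ \sigma = e$ of (\ref{eq:phipsi}), and $\theta_f \circ Fe = \close_f$ by Theorem~\ref{thm:dynamical}. On the other hand, $\phi \circ f \circ F\sigma = \close_f$ by the very definition of $\close_f$ taken from Lemma~\ref{lem:clco}. Hence the two composites coincide, and cancelling the epimorphism $F\sigma$ shows that $\phi$ is an $F$-algebra homomorphism $(T,f) \to (H,\theta_f)$; this is exactly the chase summarized by the labelled diagram accompanying the statement.

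For the second statement, dually, I want $f \circ F\psi = \psi \circ \theta_f$ as morphisms $FH \to T$, and since $\pi$ is monic it suffices to check this after postcomposing with $\pi$. On one side, $\pi \circ \psi \circ \theta_f = m \circ \theta_f = \cons_f$, using the triangle $\pi \circ \psi = m$ of (\ref{eq:phipsi}) and Theorem~\ref{thm:dynamical}; on the other side, $\pi \circ f \circ F\psi = \cons_f$ by the definition of $\cons_f$ from Lemma~\ref{lem:clco}. So the two composites agree, and cancelling the monomorphism $\pi$ shows that $\psi$ is an $F$-algebra homomorphism $(H,\theta_f) \to (T,f)$. I do not expect a genuine obstacle: the argument is pure equational bookkeeping with functoriality and the defining triangles of $\phi$, $\psi$, and $\theta_f$. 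The only delicate point is ensuring that the closedness/consistency morphisms constructed in Lemma~\ref{lem:clco} are the same ones implicitly used when forming $\theta_f$ via Theorem~\ref{thm:dynamical}, which is what the uniqueness remark above takes care of.
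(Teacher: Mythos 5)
Your proof is correct and follows essentially the same route as the paper: existence of $\theta_f$ via Lemma~\ref{lem:clco} and Theorem~\ref{thm:dynamical}, followed by an epi/mono cancellation chase through the defining triangles of $\phi$, $\psi$, $\close_f$, $\cons_f$, and $\theta_f$. The only cosmetic difference is that the paper cancels $m$ and $F\sigma$ simultaneously where you instead invoke the uniqueness of the closedness/consistency witnesses (which you correctly justify from $m$ monic and $Fe$ epic), and you spell out the dual half that the paper dismisses as analogous.
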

\begin{proof}
	\begin{cutout}{0}{\dimexpr\linewidth-8.25cm\relax}{0pt}{7}
		Assume that $\sigma \in \mathcal{E}$ and $w$ is $f$-consistent.
		The proof for the other part is analogous.
		Using Lemma~\ref{lem:clco} and Theorem~\ref{thm:dynamical} we see that $\theta_f$ exists.
		Since $F\sigma$ is epic and $m$ monic, it suffices to show $m \circ \phi \circ f \circ F\sigma = m \circ \theta_f \circ F\phi \circ F\sigma$, which is done on the right.
		\qedhere
	\end{cutout}
\end{proof}
\begin{corollary}\label{cor:hypiso}
	If $\sigma \in \mathcal{E}$ and $\pi \in \mathcal{M}$, then $\phi$ given in (\ref{eq:phipsi}) is an $F$-algebra iso $(T, f) \to (H, \theta_f)$.
\end{corollary}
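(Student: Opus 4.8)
The plan is simply to assemble the results established so far. First I would observe that the hypotheses $\sigma \in \mathcal{E}$ and $\pi \in \mathcal{M}$ place us exactly in the situation of Lemma~\ref{lem:clco}: from $\sigma \in \mathcal{E}$ we get that $w$ is $f$-closed, and from $\pi \in \mathcal{M}$ that $w$ is $f$-consistent. Having both, Theorem~\ref{thm:dynamical} supplies an algebra structure $\theta_f \colon FH \to H$, so that $(H, \theta_f)$ is a genuine $F$-algebra and the statement of the corollary even makes sense.

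Next I would apply Proposition~\ref{prop:hyphom} twice. Since $\sigma \in \mathcal{E}$ and $w$ is $f$-consistent, its first half gives that $\phi$ of (\ref{eq:phipsi}) is an $F$-algebra homomorphism $(T, f) \to (H, \theta_f)$; since $\pi \in \mathcal{M}$ and $w$ is $f$-closed, its second half gives that $\psi$ of (\ref{eq:phipsi}) is an $F$-algebra homomorphism $(H, \theta_f) \to (T, f)$.

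It then remains to see that $\phi$ is an isomorphism, for which it suffices to check that $\phi$ and $\psi$ are mutually inverse in $\cat{C}$ — this was already sketched informally just before Lemma~\ref{lem:clco}. Unwinding (\ref{eq:phipsi}), we have $\phi \circ \sigma = e$, $m \circ \phi = \pi$, $\psi \circ e = \sigma$, and $\pi \circ \psi = m$. Hence $\psi \circ \phi$ and $\id$ both serve as the diagonal in the commutative square with top edge $\sigma \in \mathcal{E}$ and right edge $\pi \in \mathcal{M}$, so by the uniqueness clause of Definition~\ref{def:fac} we get $\psi \circ \phi = \id$; symmetrically $\phi \circ \psi$ and $\id$ both fill the square with $e \in \mathcal{E}$ on top and $m \in \mathcal{M}$ on the right, giving $\phi \circ \psi = \id$. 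Thus $\phi$ is a $\cat{C}$-isomorphism that is also an $F$-algebra homomorphism, and as the inverse of an algebra homomorphism is automatically an algebra homomorphism, $\phi$ is an iso of $F$-algebras.

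I do not expect any real obstacle: every ingredient has been proved above, and the only mildly delicate point is the mutual-inverse argument, which is the routine diagonal-fill uniqueness computation already alluded to in the paragraph preceding Lemma~\ref{lem:clco}.
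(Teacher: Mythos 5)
Your proof is correct and follows exactly the route the paper intends: Lemma~\ref{lem:clco} supplies $f$-closedness and $f$-consistency, Proposition~\ref{prop:hyphom} is applied in both directions, and the mutual-inverse argument via uniqueness of diagonals is precisely the ``key observation'' sketched in the paragraph preceding Lemma~\ref{lem:clco}. No gaps; this matches the paper's (implicit) proof.
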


\subsection{Abstract Automata}\label{sec:automata}
Now we enrich $F$-algebras with initial and final states, obtaining a notion of \emph{automaton} in a category.
Then we give the full correctness theorem for automata.
We fix objects $I$ and $Y$ in $\cat{C}$, which will serve as initial state selector and output of the automaton, respectively.
\renewcommand*\windowpagestuff{%
	\vspace{-10pt}
	\[
		\begin{tikzcd}[column sep=.3cm,row sep=.12cm,ampersand replacement=\&]
			\&
				F Q \ar{ddd}{\delta_Q} \\
			\\
			\\
			\&
				Q \ar{rd}[pos=.4]{\out_Q} \\
			I \ar{ru}[pos=.6]{\init_Q} \&
				\&
				Y
		\end{tikzcd}
	\]
}
\begin{definition}[Automaton]\label{def:aut}
	\begin{cutout}{0}{\dimexpr\linewidth-3.7cm\relax}{0pt}{5}
		An \emph{automaton in $\cat{C}$} is an object $Q$ of $\cat{C}$ equipped with an \emph{initial state map} $\init_Q \colon I \to Q$, an \emph{output map} $\out_Q \colon Q \to Y$, and \emph{dynamics} $\delta_Q \colon F Q \to Q$.
		An \emph{input system} is an automaton without an output map; an \emph{output system} is an automaton without an initial state map.
	\end{cutout}
\end{definition}
Automata form a category, where morphisms $f$ between automata $U$ and $V$ are $F$-algebra homomorphisms that commute with initial state and output maps: $f \circ \init_U = \init_V$ and $\out_V \circ f = \out_U$.
Composition and identities are as in $\cat{C}$.
There are analogous categories of input and output systems.

To recover DAs (DFAs that may not be finite) over an alphabet $A$ as automata, we take $I = 1$, $Y = 2$, and $F = (-) \times A$ in the category $\cat{C} = \cat{Set}$.
An input system is then a DA without a classification of states into accepting and rejecting states; an output system is a DA without a designated initial state.

In Appendix~\ref{sec:inst} we explain how weighted automata can be modeled as automata.
Nominal automata are automata in the category of nominal sets and equivariant functions if we take $I = 1$ with a trivial action, $Y$ any nominal set, and $F = (-) \times A$ for a nominal set $A$.
We do not treat this setting in the present paper, but the \lstar{} adaptation devised by Moerman et al.~\cite{moerman2017} agrees with our abstract definitions.

We assume the existence of an initial input system $\initialAlg$ and a final output system $\finalCoalg$.
These give general notions of \emph{reachability} and \emph{observability}.
\renewcommand*\windowpagestuff{%
	\vspace{-24pt}
	\[
		\begin{tikzcd}[column sep=.92cm,row sep=.28cm,ampersand replacement=\&]
			F \initialAlg \ar[dashed]{r}{F \rch_Q} \ar{dd}[swap]{\delta_\initialAlg} \&
				F Q \ar[dashed]{r}{F \obs_Q} \ar{dd}{\delta_Q} \&
				F \finalCoalg \ar{dd}{\delta_\finalCoalg} \\
			\\
			\initialAlg \ar[dashed]{r}{\rch_Q} \&
				Q \ar[dashed]{r}{\obs_Q} \ar{rd}[pos=.7,swap]{\out_Q} \&
				\finalCoalg \ar{d}{\out_\finalCoalg} \\
			I \ar{u}{\init_\initialAlg} \ar{ru}[pos=.3,swap]{\init_Q} \&
				\&
				Y
		\end{tikzcd}
	\]
}
\begin{definition}[Reachability and Observability]\label{def:reach-obs}
	\begin{cutout}{0}{\dimexpr\linewidth-5.4cm\relax}{0pt}{5}
		The \emph{reachability map} of an automaton $Q$ is the unique input system homomorphism $\rch_Q \colon \initialAlg \to Q$; its \emph{observability map} is the unique output system homomorphism $\obs_Q \colon Q \to \finalCoalg$.
		The automaton $Q$ is \emph{reachable} if $\rch_Q \in \mathcal{E}$; it is \emph{observable} if $\obs_Q \in \mathcal{M}$.
		An automaton is \emph{minimal} if it is both reachable and observable.
	\end{cutout}
\end{definition}
In the DA setting, the set of words $A^*$ forms an initial input system.
Its initial state is the empty word $\eword \colon 1 \to A^*$, and its transition function $\cata \colon A^* \times A \to A^*$ is given by concatenation: $\cata(u, a) = ua$.
This yields the expected definition of the reachability map $\rch_Q \colon A^* \to Q$ for a DA $Q$: $\rch_Q(\eword) = \init_Q(*)$ and $\rch_Q(ua) = \delta_Q(\rch_Q(u), a)$.
The set of languages $2^{A^*}$ in this setting forms a final output system.
Its accepting states $\eword? \colon 2^{A^*} \to 2$ are those languages that contain the empty word, $\eword?(L) = L(\eword)$, and its dynamics $\deriv \colon 2^{A^*} \times A \to 2^{A^*}$ is given by $\deriv(L, a)(v) = L(av)$.
The observability map of a DA $Q$ assigns to each state the language it accepts: $\obs_Q(q)(\eword) = \out_Q(q)$ and $\obs_Q(q)(av) = \obs_Q(\delta_Q(q, a))(v)$.

In general, we define the \emph{language} of an automaton $Q$ to be $\lang_Q = \out_Q \circ \rch_Q \colon \initialAlg \to Y$. For DFAs, this is the usual definition of the accepted language $A^* \to 2$ (alternatively, $\lang_Q = \obs_Q \circ \init_Q \colon 1 \to 2^{A^*}$, which explicitly mentions the initial state).
If for automata $U$ and $V$ there exists an automaton homomorphism $U \to V$, one can prove that $\lang_U = \lang_V$.

We are now ready to give the main result of this section, which extends Proposition~\ref{prop:hyphom}.
Intuitively, it provides conditions for when the hypothesis automaton is actually the target automaton.
Remarkably, unlike Proposition~\ref{prop:hyphom}, it is enough to require just one of $\sigma \in \mathcal{E}$ and $\pi \in \mathcal{M}$, thanks to observability and reachability.
\begin{theorem}\label{thm:hypiso}
	Let $w = (S \xrightarrow{\sigma} Q, Q \xrightarrow{\pi} P)$ be a wrapper for an automaton $Q$ and let $H$ be the hypothesis automaton for $w$.
	If at least one the following is true:
	\begin{enumerate}
		\item\label{hypiso1}
			$Q$ is observable and $w$ is an $\out_Q$-consistent and $\delta_Q$-consistent wrapper such that $\sigma \in \mathcal{E}$;
		\item\label{hypiso2}
			$Q$ is reachable and $w$ is an $\init_Q$-closed and $\delta_Q$-closed wrapper such that $\pi \in \mathcal{M}$;
	\end{enumerate}
	then $H$ and $Q$ are isomorphic automata.
\end{theorem}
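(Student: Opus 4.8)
The plan is to handle the two cases dually, taking as the candidate isomorphism the diagonal $\phi\colon Q\to H$ (in case~\ref{hypiso1}) or $\psi\colon H\to Q$ (in case~\ref{hypiso2}) from diagram~(\ref{eq:phipsi}), first promoting it to a homomorphism of automata and then forcing it to be invertible by playing the relevant triangle identity from~(\ref{eq:phipsi}) against observability, resp.\ reachability, of $Q$ via the cancellation lemma (Proposition~\ref{prop:uncomp}). The key realisation is that the one-sided hypothesis already produces a map lying in $\mathcal{E}$ (resp.\ $\mathcal{M}$) between $Q$ and $H$, and that observability (resp.\ reachability) pins it into $\mathcal{M}$ (resp.\ $\mathcal{E}$) as well; a morphism lying in both $\mathcal{E}$ and $\mathcal{M}$ is an isomorphism by the diagonal fill-in for the evident square.

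For case~\ref{hypiso1}, since $\sigma\in\mathcal{E}$, Lemma~\ref{lem:clco} applied with the constant functor at $I$ gives $\init_Q$-closedness, and applied with $F$ gives $\delta_Q$-closedness; together with the assumed $\out_Q$- and $\delta_Q$-consistency this equips $H$ with an initial state $\init_H=\close_{\init_Q}$, an output map $\out_H=\cons_{\out_Q}$ satisfying $\out_H\circ e=\out_Q\circ\sigma$, and dynamics $\theta_{\delta_Q}$ (Theorem~\ref{thm:dynamical})---i.e.\ $H$ is genuinely the hypothesis automaton. Next, $\phi$ exists because $\sigma\in\mathcal{E}$, lies in $\mathcal{E}$ by the dual of Proposition~\ref{prop:uncomp}, and is an $F$-algebra homomorphism $(Q,\delta_Q)\to(H,\theta_{\delta_Q})$ by Proposition~\ref{prop:hyphom} (using $\sigma\in\mathcal{E}$ and $\delta_Q$-consistency). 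It is moreover an automaton homomorphism: $m\circ\phi\circ\init_Q=\pi\circ\init_Q=m\circ\init_H$ with $m$ monic gives $\phi\circ\init_Q=\init_H$, and $\out_H\circ\phi\circ\sigma=\out_H\circ e=\out_Q\circ\sigma$ with $\sigma$ epic gives $\out_H\circ\phi=\out_Q$. Finally, $\obs_H\circ\phi$ is an output-system homomorphism $Q\to\finalCoalg$, hence equals $\obs_Q$ by uniqueness; since $Q$ is observable, $\obs_Q\in\mathcal{M}$, so $\phi\in\mathcal{M}$ by Proposition~\ref{prop:uncomp}, and therefore $\phi$ is an isomorphism of automata, its inverse being again an automaton homomorphism.

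Case~\ref{hypiso2} is the exact dual: $\pi\in\mathcal{M}$ yields $\out_Q$- and $\delta_Q$-consistency by Lemma~\ref{lem:clco}, the assumed $\init_Q$- and $\delta_Q$-closedness then make $H$ the hypothesis automaton, the diagonal $\psi\colon H\to Q$ from~(\ref{eq:phipsi}) is an $F$-algebra homomorphism lying in $\mathcal{M}$ (Propositions~\ref{prop:uncomp} and~\ref{prop:hyphom}) and an automaton homomorphism by the same $\init$/$\out$ cancellations read on the other side ($\pi$ monic, $e$ epic), and since $\psi\circ\rch_H=\rch_Q\in\mathcal{E}$ by uniqueness of $\rch_Q$ and reachability of $Q$, the dual of Proposition~\ref{prop:uncomp} puts $\psi$ in $\mathcal{E}$, making it an isomorphism. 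The one delicate point is bookkeeping rather than ideas: checking that each constant-functor instance of Lemma~\ref{lem:clco} is legitimate, that the $\out$-side ``consistency'' really is closedness transported through $\opc{\cat{C}}$ as described, and that the compatibility squares with $\init$ and $\out$ are cancelled against a mono on one side and an epi on the other, exactly as dictated by whether we are in case~\ref{hypiso1} or~\ref{hypiso2}.
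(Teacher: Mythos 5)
Your proof is correct and follows essentially the same route as the paper's: promote $\phi$ (resp.\ $\psi$) to an automaton homomorphism via Proposition~\ref{prop:hyphom} and its dual for the constant functors, then use observability (resp.\ reachability) together with finality (resp.\ initiality) and Proposition~\ref{prop:uncomp} to place it in $\mathcal{M}$ (resp.\ $\mathcal{E}$). The only cosmetic difference is the final step: the paper deduces $\pi = m \circ \phi \in \mathcal{M}$ and invokes Corollary~\ref{cor:hypiso} three times, whereas you conclude directly from $\phi \in \mathcal{E} \cap \mathcal{M}$ that $\phi$ is an isomorphism --- both rest on the same diagonal fill-in argument.
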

\begin{proof}
	We only show point \ref{hypiso1}; the other is analogous.
	Recall that the initial state map of the automaton is an algebra for the constant functor $I$ while the output map is a coalgebra for the constant functor $Y$.
	Hence, we can apply Proposition~\ref{prop:hyphom} (or its dual for the coalgebra) to them and to $\delta_Q$ to find that $\phi \colon Q \to H$ is an automaton homomorphism.
	Then $\obs_Q = \obs_H \circ \phi$ by finality of $\finalCoalg$.
	Since $\obs_Q$ is in $\mathcal{M}$, this means that $\phi \in \mathcal{M}$ (see Proposition~\ref{prop:uncomp}).
	Because $m \in \mathcal{M}$ and $\pi = m \circ \phi$ (\ref{eq:phipsi}), we have $\pi \in \mathcal{M}$.
	Therefore, we can apply Corollary~\ref{cor:hypiso}, again three times, and obtain an automaton isomorphism between $Q$ and $H$.
\end{proof}

\subsection{Applications to Known Learning Algorithms}\label{sec:applications}
Recall that \ID{} assumes a finite set $S \subseteq A^*$ such that each state of the minimal target DFA $M$ is reached by reading one of the words in $S$ when starting from $m_0$.
In the terminology of the previous section, the algorithm takes a wrapper $(S \xrightarrow{\sigma} M, M \xrightarrow{\pi} 2^E)$ as in Example~\ref{ex:twrapper}, with $E$ initialized to $\{\eword\}$, extends $E$ for $\delta$-consistency, and obtains $M$ as the hypothesis.

The correctness of the algorithm can be explained via Theorem~\ref{thm:hypiso}(\ref{hypiso1}) as follows.
The assumption of \ID{} about $S$ is equivalent to $\sigma \in \mathcal{E}$, because $\sigma$ is defined to be the reachability map restricted to $S$.
The $\out$-consistency condition is satisfied by initializing $E = \{\eword\}$, and ensuring $\delta$-consistency is precisely what the algorithm does.
Therefore, by Theorem~\ref{thm:hypiso}(\ref{hypiso1}), the final wrapper yields a hypothesis DFA isomorphic to $M$.

Theorem~\ref{thm:hypiso}(\ref{hypiso2}) suggests a dual to this algorithm, where we assume a finite set $E \subseteq A^*$ such that every pair of different states of $M$ is distinguished by some word in $E$.
Enforcing closedness will lead to a hypothesis DFA isomorphic to $M$.

In a minimal DFA $M$ with at most $n$ states, every state is reached via a word of length at most $n-1$; similarly, every pair of states is distinguished by a word of length up to $n - 1$.
The algorithm of Arbib and Zeiger takes a wrapper $(S \xrightarrow{\sigma} M, M \xrightarrow{\pi} 2^E)$ as in Example~\ref{ex:twrapper}, with $S = E = A^{\le n-1}$, and by Corollary~\ref{cor:hypiso} (applied once for each of $\init_M$, $\delta_M$, and $\out_M$) immediately obtains $M$ up to isomorphism as the hypothesis.
We note that taking this large $E$ is unnecessary, as we could simply apply Theorem~\ref{thm:hypiso}(\ref{hypiso1}), thus reducing the algorithm to \ID{}.

Finally, we consider the \lstar{} algorithm.
Let the wrapper $(S \xrightarrow{\inc} A^* \xrightarrow{\rch} M, M \xrightarrow{\pi} 2^E)$ be as in Example~\ref{ex:twrapper}, and let $H$ be its hypothesis DFA.
We have the following result.

\begin{proposition}\label{prop:lstar}
	If for every prefix $p$ of a word $z \in A^*$ there exists an $s \in S$ such that $\rch_M(s) = \rch_M(p)$, then $\lang_H(z) = \lang_M(z)$.
\end{proposition}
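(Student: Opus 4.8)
The plan is to promote the hypothesis of the proposition into an invariant that is propagated by an induction on the length of prefixes of $z$, and then to read the conclusion off from the behaviour of $H$ on $z$ itself. Throughout I write $\rch_M \colon A^* \to M$ for the reachability map, so that $\sigma = \rch_M \circ \inc$ and, for every $u \in S \cup S\cdot A$, $\row(u) = (\pi \circ \rch_M)(u)$, since $\pi\circ\rch_M$ sends a word to its residual language restricted to $E$. As $H$ is a well-defined hypothesis DFA, the table is closed and consistent and $\eword \in S$, $\eword \in E$; by Theorem~\ref{thm:dynamical} and the concrete reading of $\theta_{\delta_M}$ recorded after Proposition~\ref{prop:closedcons} this yields $\delta_H(\row(s),a) = \row(sa)$, the initial state $\rch_H(\eword) = \row(\eword)$, and $\out_H(\row(s)) = \row(s)(\eword) = \lang(s)$ --- each of the three being a function of $\row(s)$ and hence well defined. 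I will also use repeatedly that $\row(u) = \row(u')$ whenever $\rch_M(u) = \rch_M(u')$, which is immediate from $\row = \pi \circ \rch_M$.

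The heart of the argument is the following claim, proved by induction on $|p|$: \emph{for every prefix $p$ of $z$ and every $s \in S$ with $\rch_M(s) = \rch_M(p)$ we have $\rch_H(p) = \row(s)$.} For $p = \eword$ we may take $s = \eword \in S$, giving $\rch_H(\eword) = \row(\eword)$; and any other $s \in S$ with $\rch_M(s) = \rch_M(\eword)$ satisfies $\row(s) = \row(\eword)$, so the claim holds at $\eword$. For the inductive step, write $p = p'a$, so that $p'$ is a shorter prefix of $z$; the hypothesis of the proposition supplies $s' \in S$ with $\rch_M(s') = \rch_M(p')$, and by the induction hypothesis $\rch_H(p') = \row(s')$, whence
\[
	\rch_H(p) = \delta_H(\rch_H(p'),a) = \delta_H(\row(s'),a) = \row(s'a).
\]
Since $\rch_M$ commutes with transitions, $\rch_M(s'a) = \delta_M(\rch_M(s'),a) = \delta_M(\rch_M(p'),a) = \rch_M(p'a) = \rch_M(p)$, so $\row(s'a) = \row(s)$ for every $s \in S$ with $\rch_M(s) = \rch_M(p)$; this establishes the claim at $p$.

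To conclude, apply the claim with $p = z$ (a prefix of itself): the hypothesis of the proposition gives some $s \in S$ with $\rch_M(s) = \rch_M(z)$, and then $\lang_H(z) = \out_H(\rch_H(z)) = \out_H(\row(s)) = \lang(s)$. On the other hand $\lang_M(s) = \out_M(\rch_M(s)) = \out_M(\rch_M(z)) = \lang_M(z)$, and $\lang_M = \lang$ because $M$ accepts $\lang$; therefore $\lang_H(z) = \lang(s) = \lang_M(z)$, as required.

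The one point that needs care is the division of labour inside the induction: the induction premise must be stated in terms of $\rch_M$ (where the hypothesis of the proposition carries information) while the conclusion must be stated in terms of $\rch_H$, and the two sides communicate only through the identity $\row = \pi \circ \rch_M$ on $S \cup S\cdot A$ together with the fact that closedness pins $\delta_H$ down only up to equality of rows. No genuinely hard estimate arises; the remaining ingredients --- that $\rch_M$ and $\rch_H$ are input system homomorphisms, that $\out_H(\row(s)) = \lang(s)$, and that $\lang_M = \lang$ --- are all immediate from the definitions collected in the first paragraph.
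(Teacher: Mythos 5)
Your proposal is correct and follows essentially the same route as the paper's proof: an induction over the prefixes $p$ of $z$ maintaining the invariant that $\rch_H(p)$ coincides with the ($m$-image of the) row of any witness $s\in S$ with $\rch_M(s)=\rch_M(p)$, followed by reading off the output at $p=z$. The only difference is presentational --- you work with the concrete $\row$/hypothesis-DFA definitions and invoke $\eword\in S$ for the base case, where the paper uses the witness $s_0$ for the prefix $\eword$ together with $\xi_\init = m\circ\init_H$, which avoids assuming $\eword\in S$ outright; in the \lstar{} setting these coincide.
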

\begin{proof}
	Let $(S \xrightarrow{e} H, H \xrightarrow{m} 2^E)$ be the minimization of the wrapper.
	Furthermore, let $z = a_1 a_2 \ldots a_n$ for $n$ the length of $z$ and each $a_i \in A$.
	For $0 \le i \le n$, define $z_i = a_1 a_2 \ldots a_n$.
	Each prefix of $z$ is $z_i$ for some $i$.
	Assume that for every $i$ there is an $s_i \in S$ such that $\rch_M(s_i) = \rch_M(z_i)$.
	We will show by induction to $n$ that for all $i$, $\xi(s_i) = (m \circ \rch_H)(z_i)$.
	Note that $z_0 = \eword$.
	We have
	\begin{align*}
		\xi(s_0) &
			= (\pi \circ \rch_M)(s_0) &
			&
			\text{(definition of $\xi$)} \\
		&
			= (\pi \circ \rch_M)(\eword) &
			&
			\text{($\rch_M(s_0) = \rch_M(z_0)$)} \\
		&
			= (\pi \circ \init_M)(*) &
			&
			\text{(definition of $\rch_M$)} \\
		&
			= \xi_\init(*) &
			&
			\text{(definition of $\xi_\init$)} \\
		&
			= (m \circ \init_H)(*) &
			&
			\text{(definition of $\init_H$)} \\
		&
			= (m \circ \rch_H)(\eword) &
			&
			\text{(definition of $\rch_H$)}.
	\end{align*}

	Now assume that for a certain $1 \le i < n$ we have $\xi(s_i) = (m \circ \rch_H)(z_i)$.
	This implies $e(s_i) = \rch_H(z_i)$ because $\xi = m \circ e$ and $m$ is injective.
	Then
	\begin{align*}
		\xi(s_{i + 1}) &
			= (\pi \circ \rch_M)(s_{i + 1}) &
			&
			\text{(definition of $\xi$)} \\
		&
			= (\pi \circ \rch_M)(z_{i + 1}) &
			&
			\text{($\rch_M(s_{i + 1}) = \rch_M(z_{i + 1})$)} \\
		&
			= (\pi \circ \rch_M)(z_i a_{i + 1}) &
			&
			\text{($z_{i + 1} = z_i a_{i + 1}$)} \\
		&
			= (\pi \circ \delta_M)(\rch_M(z_i), a_{i + 1}) &
			&
			\text{(definition of $\rch_M$)} \\
		&
			= (\pi \circ \delta_M)(\rch_M(s_i), a_{i + 1}) &
			&
			\text{($\rch_M(s_i) = \rch_M(z_i)$)} \\
		&
			= \xi_\delta(s_i, a_{i + 1}) &
			&
			\text{(definition of $\xi_\delta$)} \\
		&
			= \cons_\delta(e(s_i), a_{i + 1}) &
			&
			\text{(definition of $\cons_\delta$)} \\
		&
			= \cons_\delta(\rch_H(z_i), a_{i + 1}) &
			&
			\text{(induction hypothesis)} \\
		&
			= (m \circ \delta_H)(\rch_H(z_i), a_{i + 1}) &
			&
			\text{(definition of $\delta_H$)} \\
		&
			= (m \circ \rch_H)(z_i a_{i + 1}) &
			&
			\text{(definition of $\rch_H$)} \\
		&
			= (m \circ \rch_H)(z_{i + 1}) &
			&
			\text{($z_{i + 1} = z_i a_{i + 1}$)}.
	\end{align*}
	This concludes the proof that for all $i$, $\xi(s_i) = (m \circ \rch_H)(z_i)$.
	In particular, then, $(m \circ e)(s_n) = \xi(s_n) = (m \circ \rch_H)(z)$.
	Because $m$ is injective, we have $e(s_n) = \rch_H(z)$.
	Therefore,
	\begin{align*}
		\lang_H(z) &
			= (\out_H \circ \rch_H)(z) &
			&
			\text{(definition of $\lang_H$)} \\
		&
			= (\out_H \circ e)(s_n) &
			&
			\text{($e(s_n) = \rch_H(z)$)} \\
		&
			= \xi_\out(s_n) &
			&
			\text{(definition of $\out_H$)} \\
		&
			= (\out_M \circ \rch_M)(s_n) &
			&
			\text{(definition of $\xi_\out$)} \\
		&
			= (\out_M \circ \rch_M)(z) &
			&
			\text{($\rch_M(s_n) = \rch_M(z)$)} \\
		&
			= \lang_M(z) &
			&
			\text{(definition of $\lang_M$)}.
		\qedhere
	\end{align*}
\end{proof}
\begin{corollary}
	If $z \in A^*$ is a counterexample, i.e., $\lang_H(z) \ne \lang_M(z)$, then adding all prefixes of $z$ to $S$ will increase the size of $\im(\rch_M \circ \inc)$.
\end{corollary}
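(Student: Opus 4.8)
The plan is to read the statement as the contrapositive of Proposition~\ref{prop:lstar}. That proposition says: if for every prefix $p$ of $z$ there is an $s \in S$ with $\rch_M(s) = \rch_M(p)$, then $\lang_H(z) = \lang_M(z)$. Since we are given $\lang_H(z) \ne \lang_M(z)$, the hypothesis of the proposition must fail, so there exists a prefix $p$ of $z$ such that $\rch_M(s) \ne \rch_M(p)$ for every $s \in S$. Equivalently, $\rch_M(p)$ is a state of $M$ that does \emph{not} lie in the set $\im(\rch_M \circ \inc)$ of states currently reached by words in $S$.

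Next I would compare the image before and after the \lstar{} update, which replaces $S$ by $S' = S \cup \{\text{prefixes of } z\}$; write $\inc' \colon S' \hookrightarrow A^*$ for the new inclusion. Because $S \subseteq S'$, we immediately get $\im(\rch_M \circ \inc) \subseteq \im(\rch_M \circ \inc')$. On the other hand, $p \in S'$, so $\rch_M(p) \in \im(\rch_M \circ \inc')$, whereas by the previous paragraph $\rch_M(p) \notin \im(\rch_M \circ \inc)$. Hence the inclusion of images is strict. Since both images are subsets of the finite set $M$, strictness of the inclusion gives $|\im(\rch_M \circ \inc')| > |\im(\rch_M \circ \inc)|$, which is the claim.

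I do not expect any real obstacle: all the work is done by Proposition~\ref{prop:lstar}, and what remains is the elementary monotonicity observation that enlarging $S$ can only enlarge the set of reached states, combined with the fact that we have exhibited one genuinely new reached state $\rch_M(p)$. The only point worth spelling out is that $\im(\rch_M \circ \inc)$ is finite (it is bounded by $|M|$), so that talking about an \emph{increase} in its size is meaningful.
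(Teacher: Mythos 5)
Your proposal is correct and is exactly the argument the paper intends: the corollary is stated without proof as an immediate consequence of Proposition~\ref{prop:lstar}, read contrapositively to produce a prefix $p$ of $z$ with $\rch_M(p)$ outside the current image, which the update then adds. The finiteness remark and the monotonicity of the image under enlarging $S$ are the right (and only) details to check.
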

Thus, after \lstar{} has processed at most $|M|$ counterexamples, the conditions for Theorem~\ref{thm:hypiso}(\ref{hypiso1}) are satisfied, which means that the next hypothesis will be isomorphic to $M$.

\section{Minimization}\label{sec:minimization}
In this section we explore the connection $\circled{1}$ in \figurename~\ref{fig:calf} between automata learning and minimization algorithms.
Recall that minimization algorithms typically have two phases: a reachability analysis phase, which removes unreachable states; and a merging phase, where language-equivalent states are merged.
We will rephrase these two phases in the terminology of Section~\ref{sec:preautlearn}, and we will show that Theorem~\ref{thm:hypiso} can be used to explain their correctness.
We fix a DFA $Q$ throughout the section.
\parspace
\subparagraph{Reachability Analysis Phase.}
Let $R$ be the reachable part of $Q$ and $S$ any subset of $R$.
There is a wrapper $w = (S \xrightarrow{\sigma} R, R \xrightarrow{\pi} Q)$, where $\sigma$ and $\pi$ are just the inclusions.
The set $S$ models the state of an algorithm performing a reachability analysis on $Q$.
Since $\sigma$ and $\pi$ are inclusions, the hypothesis for $w$ is the set $S$ itself.

As the inclusion $\pi$ is an automaton homomorphism, by Proposition~\ref{prop:shiftwrapper} we can use the transition function $\delta \colon Q \times A \to Q$ and initial state $q_0 \colon 1 \to Q$ to compute
\begin{align*}
	\xi^{(\sigma, \pi)}_{\delta \colon R \times A \to R} &
		= \xi^{(\pi \circ \sigma, \id)}_{\delta \colon Q \times A \to Q} \colon S \times A \to Q &
		\xi^{(\sigma, \pi)}_{r_0 \colon 1 \to R} &
		= \xi^{(\pi \circ \sigma, \id)}_{q_0 \colon 1 \to Q} = q_0 \colon 1 \to Q.
\end{align*}
The function $\xi_\delta$ simply assigns to each state $s \in S$ and each symbol $a \in A$ the state $\delta_Q(s, a)$.
The wrapper is therefore $\delta_R$-closed if for all $q \in S$ and $a \in A$ we have $\delta_Q(q, a) \in S$; it is $r_0$-closed if $q_0 \in S$.
The obvious algorithm to ensure these closedness properties, namely initializing $S = \{q_0\}$ and adding $\delta_Q(q, a)$ to $S$ while there are $q \in S$ and $a \in A$ such that $\delta_Q(q, a) \not\in S$, is the usual reachability analysis algorithm.
Since $R$ is reachable and $\pi$ injective, Theorem~\ref{thm:hypiso}(\ref{hypiso2}) confirms that this algorithm finds an automaton isomorphic to $R$.

Alternatively, one could let $S$ be a subset of $A^*$ and use the wrapper $(S \xrightarrow{\inc} A^* \xrightarrow{\rch} R, A^* \xrightarrow{\pi} Q)$, where $\inc$ is the inclusion and $\rch$ the reachability map for $R$ (see Definition~\ref{def:reach-obs}).
Starting from $S = \{\eword\}$, the algorithm would add to $S$ words reaching states not yet visited.
This is closer to how automata learning algorithms fix closedness.
\parspace
\subparagraph{State Merging Phase.}
Now we are interested in finding the DFA $O$ that is obtained from $Q$ by merging states that accept the same language.
Formally, this automaton is obtained by factorizing the observability map $\obs_Q$ for $Q$ (see Definition~\ref{def:reach-obs}) as DA homomorphisms $Q \overset{h}{\twoheadrightarrow} O \overset{\obs}{\rightarrowtail} 2^{A^*}$.
Given a finite set $E \subseteq A^*$, these can be made into a wrapper $w = (Q \xrightarrow{h} O, O \xrightarrow{\obs} 2^{A^*} \xrightarrow{\res} 2^E)$, where $\res$ restricts a language to the words in $E$.
Consider $\xi^w \colon Q \to 2^E$ (i.e., the composition of the morphisms in $w$): even though $O$ is not known a priori, this function can be computed by testing for a given state which words in $E$ it accepts.
Because $h$ is an automaton homomorphism, $\xi_{\delta_O}^w$ by Proposition~\ref{prop:shiftwrapper} equals $\xi_{\delta_Q}^{(\id, \xi^w)} = \xi^w \circ \delta_Q$.
Since $h$ is surjective, Theorem~\ref{thm:hypiso}(\ref{hypiso1}) says that we only have to ensure $\delta$-consistency and $\out$-consistency.
One may start with $E = \{\eword\}$ to satisfy the latter.
For $\delta$-consistency, for all $q_1, q_2 \in Q$ such that $\xi(q_1) = \xi(q_2)$ we must have $\xi(\delta(q_1, a)) = \xi(\delta(q_2, a))$ for each $a \in A$.
This can be ensured in the same fashion as for an observation table.

The algorithm we have just described reminds of Moore's reduction procedure~\cite{moore1956}.
However, using a table as data structure is less efficient because many cells may be redundant.
The same redundancy has been observed in the \lstar{} algorithm.
In Section~\ref{sec:opt} we will show how CALF covers optimized versions of these algorithms, but first we discuss yet another application of our framework.

\section{Conformance Testing}\label{sec:testing}
In this section we consider the connections \circled{2} and \circled{3} in \figurename~\ref{fig:calf} between testing and, respectively, automata learning and minimization.
More precisely, we consider an instance of \emph{conformance testing} where a known DFA $U$ is tested for equivalence against a black-box DFA $V$.
One application of this problem is the realization of equivalence queries in the \lstar{} algorithm, where $U$ is the hypothesis DFA and $V$ is the target.

Testing consists in comparing $U$ and $V$ on a finite set of words, approximating their behavior. Following the idea of using wrappers to generalize approximations, we now capture testing using our abstract learning machinery. This will allows us to explain correctness of testing using Theorem~\ref{thm:hypiso}.
First note that given objects $S$ and $P$ and morphisms $\alpha \colon S \to \initialAlg$ and $\omega \colon \finalCoalg \to P$, we can associate wrappers to both the known automaton $U$ and the black-box $V$ by composing with their reachability and observability maps:
\begin{align*}
	w_U &
		= (\sigma_U, \pi_U) &
		\sigma_U &
		= S \xrightarrow{\alpha} \initialAlg \xrightarrow{\rch_U} U &
		\pi_U &
		= U \xrightarrow{\obs_U} \finalCoalg \xrightarrow{\omega} P \\
	w_V &
		= (\sigma_V, \pi_V) &
		\sigma_V &
		= S \xrightarrow{\alpha} \initialAlg \xrightarrow{\rch_V} V &
		\pi_V &
		= V \xrightarrow{\obs_V} \finalCoalg \xrightarrow{\omega} P.
\end{align*}
We now use several approximations in defining the \emph{tests of $U$ against $V$}, covering transition functions, initial states, and final states:
\begin{align}\label{eq:sc}
	\xi^{w_U} &
		= \xi^{w_V} &
		\xi^{w_U}_{\init_U} &
		= \xi^{w_V}_{\init_V} &
		\xi^{w_U}_{\delta_U} &
		= \xi^{w_V}_{\delta_V} &
		\xi^{w_U}_{\out_U} &
		= \xi^{w_V}_{\out_V}.
\end{align}
\begin{example}\label{ex:testing}
	If $U$ and $V$ are DFAs, we can choose wrappers as in Example~\ref{ex:twrapper}, namely by letting $\alpha = \inc \colon S \to A^*$ be an inclusion and $\omega = \res \colon 2^{A^*} \to 2^E$ a restriction.
	The approximations along $w_U$ are then given by $\xi^{w_U}(s)(e) = \lang_U(se)$, $\xi^{w_U}_{\init_U}(e) = \lang_U(e)$, $\xi^{w_U}_{\delta_U}(s,a)(e) = \lang_U(sae)$, and $\xi^{w_U}_{\out_U}(s) = \lang_U(s)$, and analogously for $w_V$.
	Thus, checking (\ref{eq:sc}) here amounts to checking whether the DFAs accept exactly the same words from the set $S \cdot E \cup E \cup S \cdot A \cdot E \cup S$.
\end{example}
Our main result regarding conformance testing captures the properties of the wrappers that need to hold for the above tests to prove that $U$ is equivalent to the black-box $V$.
\begin{theorem}\label{thm:testing}
	Given the above wrappers, suppose $\sigma_U \in \mathcal{E}$, $\pi_U \in \mathcal{M}$, and either $\sigma_V \in \mathcal{E}$ and $V$ is observable or $\pi_V \in \mathcal{M}$ and $V$ is reachable.
	Then $U$ is isomorphic to $V$ if and only if all the equalities (\ref{eq:sc}) hold.
\end{theorem}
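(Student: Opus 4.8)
The plan is to route both implications through the hypothesis automata $H_U$ and $H_V$ of the two wrappers, which the test conditions (\ref{eq:sc}) will force to coincide.

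\emph{Forward direction.} Assume $g \colon U \to V$ is an automaton isomorphism. Then $g$ is both an input system and an output system homomorphism, so by initiality of $\initialAlg$ and finality of $\finalCoalg$ we get $\rch_V = g \circ \rch_U$ and $\obs_U = \obs_V \circ g$, hence $\sigma_V = g \circ \sigma_U$ and $\pi_U = \pi_V \circ g$. The equalities (\ref{eq:sc}) then follow: $\xi^{w_V} = \pi_V \circ \sigma_V = \pi_U \circ \sigma_U = \xi^{w_U}$ directly, while $\xi^{w_V}_{\delta_V} = \xi^{(g \circ \sigma_U,\, \pi_V)}_{\delta_V} = \xi^{(\sigma_U,\, \pi_V \circ g)}_{\delta_U} = \xi^{w_U}_{\delta_U}$ by Proposition~\ref{prop:shiftwrapper} applied with $h = g$; the cases of $\init$ and $\out$ are handled identically, using the constant functors $I$ and $Y$ (the latter via the opposite wrapper in $\opc{\cat{C}}$).

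\emph{Converse.} First, $\sigma_U \in \mathcal{E}$ and $\pi_U \in \mathcal{M}$ already make $U$ minimal: factoring $\sigma_U = \rch_U \circ \alpha$ and $\pi_U = \omega \circ \obs_U$, Proposition~\ref{prop:uncomp} and its dual give $\rch_U \in \mathcal{E}$ and $\obs_U \in \mathcal{M}$. By Lemma~\ref{lem:clco} and its dual, $w_U$ is then closed and consistent for each of $\delta_U$, $\init_U$, $\out_U$, so its hypothesis automaton $H_U$ is well defined and Theorem~\ref{thm:hypiso}(\ref{hypiso1}) yields $H_U \cong U$. The crucial observation is that the hypothesis automaton of a wrapper depends only on the objects $S$, $P$ and on the approximations $\xi^w$, $\xi^w_{\delta}$, $\xi^w_{\init}$, $\xi^w_{\out}$: the $(\mathcal{E},\mathcal{M})$-factorization of $\xi^w$ fixes $H$, $e$, and $m$, and because $m$ is monic and $Fe$ is epic, the morphisms $\close$, $\cons$, and hence the transition map $\theta_\delta$, the initial state $\close_{\init}$, and the output $\cons_{\out}$, are each uniquely determined by the corresponding approximation. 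Therefore, when (\ref{eq:sc}) holds, $w_V$ inherits all the closedness and consistency properties of $w_U$, and $H_V$ coincides with $H_U$.

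It remains to see that $H_V \cong V$, which follows from a second application of Theorem~\ref{thm:hypiso}, now to $V$: if $\sigma_V \in \mathcal{E}$ and $V$ is observable, $w_V$ is $\out_V$- and $\delta_V$-consistent (inherited from $w_U$) with $\sigma_V \in \mathcal{E}$, so part~(\ref{hypiso1}) applies; if $\pi_V \in \mathcal{M}$ and $V$ is reachable, $w_V$ is $\init_V$- and $\delta_V$-closed (inherited) with $\pi_V \in \mathcal{M}$, so part~(\ref{hypiso2}) applies. Either way $H_V \cong V$, and composing the isomorphisms gives $U \cong H_U = H_V \cong V$. I expect the main obstacle to be the bookkeeping behind the claim that the hypothesis automaton depends only on the approximations: one must check, separately for the transition, initial-state, and output components, that the relevant factorization/diagonal morphism is uniquely pinned down, so that equality of approximations genuinely transports both the closedness and consistency properties and the induced automaton structure from $w_U$ to $w_V$.
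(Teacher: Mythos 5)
Your proposal is correct and follows essentially the same route as the paper's proof: the forward direction uses initiality/finality plus Proposition~\ref{prop:shiftwrapper}, and the converse establishes minimality of $U$ via Proposition~\ref{prop:uncomp}, transfers closedness/consistency from $w_U$ to $w_V$ using Lemma~\ref{lem:clco} and the equalities (\ref{eq:sc}), observes that the two hypothesis automata coincide, and concludes via Theorem~\ref{thm:hypiso}. In fact you spell out more carefully than the paper does why the hypothesis automaton is determined by the approximations alone (uniqueness of $\theta_\delta$, $\close_\init$, $\cons_\out$ from $m$ monic and $Fe$ epic), which is the step the paper leaves implicit.
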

\begin{proof}
	Assume first that $U$ is isomorphic to $V$ as witnessed by an isomorphism $\phi \colon U \to V$.
	By initiality, $\phi \circ \rch_U = \rch_V$; by finality, $\obs_V \circ \phi = \obs_U$.
	Then $\obs_U \circ \rch_U = \obs_V \circ \phi \circ \rch_U = \obs_V \circ \rch_V$.
	From this equality the conclusions (\ref{eq:sc}) follow using Proposition~\ref{prop:shiftwrapper}.
	Now assume that the equalities (\ref{eq:sc}) hold.
	From these equations we know that $w_V$ must be $\init$-closed, $\delta$-closed, $\delta$-consistent, and $\out$-consistent, since $w_U$, satisfying $\sigma_U \in \mathcal{E}$ and $\pi_U \in \mathcal{M}$, has these properties by Lemma~\ref{lem:clco}.
	Note that (\ref{eq:sc}) also implies that the hypothesis automata of $w_U$ and $w_V$ coincide.
	Moreover, $\sigma_U \in \mathcal{E}$ and $\pi_U \in \mathcal{M}$ imply that $U$ is minimal because of their definition and Proposition~\ref{prop:uncomp} (and its dual).
	Using this, in combination with the assumptions for $V$, we can now apply Theorem~\ref{thm:hypiso} and conclude that $U$ and $V$ are isomorphic.
\end{proof}
We now comment on the connection between testing and minimization algorithms.
Minimization is a natural choice when looking for a set of words approximating $U$ to be tested against $V$.
Formally, recall from Section~\ref{sec:minimization} that we can use reachability and state merging to find sets $S, E \subseteq A^*$.
Moreover, reachability analysis gives an $\inc \colon S \to A^*$ that makes $\sigma_U$ surjective and state merging gives a $\res \colon 2^{A^*} \to 2^E$ that makes $\pi_U$ injective (recall that $\initialAlg = A^*$ and $\finalCoalg = 2^{A^*}$ in the DFA setting).
These, together with reachability and observability maps, give wrappers for $U$ and $V$.
The condition of Theorem~\ref{thm:testing} on $w_V$ may not hold right away, but these wrappers are convenient starting points for algorithmic techniques, as we now show.
\parspace
\subparagraph{W-method.}
We instantiate the above framework to recover the W-method~\cite{chow1978}.
This algorithm assumes to be given an upper bound $n$ on the number of states of the unknown DFA $V$.
Assume for convenience that $U$ and $V$ are minimal DFAs.
We apply our framework as follows: first, we build the wrapper $w_U = (\sigma_U, \pi_U)$.
We use the minimization algorithms from Section~\ref{sec:minimization} to find $S \xrightarrow{\inc} A^*$ and $2^{A^*} \xrightarrow{\res} 2^E$ with finite $S, E \subseteq A^*$, which yield $\sigma_U = \rch_U \circ \inc \in \mathcal{E}$ and $\pi_U = \res \circ \obs_U \in \mathcal{M}$.
If at this point the equalities (\ref{eq:sc}) do not hold, then we can conclude that $U$ and $V$ accept different languages, and the testing failed.
If we assume they hold, then because $\sigma_U \in \mathcal{E}$ and $\pi_U \in \mathcal{M}$ this means that $|\im(\xi^{w_V})| = |\im(\xi^{w_U})| = |U|$.
The image of $\xi^{w_V} = \pi_V \circ \sigma_V$ is at least as big as the image of $\sigma_V$, so $|\im(\sigma_V)| \ge |U|$.
By assumption we know that the size of $V$ is at most $n$, and hence we update $S$ to $S \cdot A^{\le n - |U|}$, which yields $\sigma_V \in \mathcal{E}$ because $\eword \in S$.
Now we have $\sigma_U \in \mathcal{E}$, $\pi_U \in \mathcal{M}$, and $\sigma_V \in \mathcal{E}$.
Applying Theorem~\ref{thm:testing}, we can find out whether $U$ and $V$ are isomorphic by testing (\ref{eq:sc}) for the updated wrappers.
Instantiating what the equalities in (\ref{eq:sc}) mean (see Example~\ref{ex:testing}), we recover the test sequences generated by the W-method.

\section{Optimized Algorithms}\label{sec:opt}
When fixing a consistency defect in an observation table, we add a column to distinguish two currently equal rows.
Unfortunately, adding a full column implies that a membership query is needed also for each other row.
Kearns and Vazirani~\cite{kearns1994} avoid this redundancy by basing the learning algorithm on a \emph{classification tree} rather than an observation table.
We now show that CALF encompasses this optimized algorithm, which will allow us to derive optimizations for other algorithms.
First we introduce a notion of \emph{classification tree} (also called \emph{discrimination tree}), close to the one by Isberner~\cite{isberner2015}.
\begin{definition}[Classification Tree]
	Given a finite $S \subseteq A^*$, a \emph{classification tree} $\tau$ on $S$ is a labeled binary tree with internal nodes labeled by words from $A^*$ and leaves labeled by subsets of $S$.
\end{definition}
The \emph{language classifier} for $\tau$ is the function $\omega_\tau \colon 2^{A^*} \to \Ps S$ that operates as follows.
Given a language $L \in 2^{A^*}$, starting from the root of the tree: if the current node is a leaf with label $U \in \Ps S$, the function returns $U$; if the current node is an internal node with label $v \in A^*$, we repeat the process from the left subtree if $v \in L$ and from the right subtree otherwise.

In the CALF terminology, each classification tree $\tau$ gives rise to a wrapper $(S \xrightarrow{\inc} A^* \xrightarrow{\rch} M, M \xrightarrow{\obs} 2^{A^*} \xrightarrow{\omega_\tau} \Ps S)$, where $\inc$ is the inclusion and $M$ is the minimal DFA for the language $\lang$ that is to be learned.
We define a function $\sift_\tau \colon A^* \to \Ps S$ as the composition $\omega_\tau \circ \obs_M \circ \rch_M$.
This function \emph{sifts}~\cite{kearns1994} words $u \in A^*$ through the tree by moving on a node with label $v \in A^*$ to the subtree corresponding to the value of $\lang(uv)$.
Applying Proposition~\ref{prop:shiftwrapper}, the approximated initial state map and dynamics of $M$ can be rewritten using this function: $\xi_\init = \sift_\tau \circ \eword \colon 1 \to \Ps S$ and $\xi_\delta = \sift_\tau \circ \cata \circ (\inc \times \id_A) \colon S \times A \to \Ps S$.
Recall that $\eword \colon 1 \to A^*$ is the empty word and $\cata \colon A^* \times A \to A^*$ concatenates its arguments.
The function $\xi_\out \colon S \to 2$ is still the same as for an observation table: $\xi_\out(s) = \lang(s)$.
From these definitions we obtain the following notions of closedness and consistency.
The wrapper is:
\begin{itemize}
	\item
		$\delta$-closed if for each $t \in S \cdot A$ there is an $s \in S$ such that $\sift_\tau(s) = \sift_\tau(t)$;
	\item
		$\delta$-consistent if for all $s_1, s_2 \in S$ such that $\sift_\tau(s_1) = \sift_\tau(s_2)$ we have $\sift_\tau(s_1a) = \sift_\tau(s_2a)$ for any $a \in A$;
	\item
		$\init$-closed if there is an $s \in S$ such that $\sift_\tau(s) = \sift_\tau(\eword)$;
	\item
		$\out$-consistent if for all $s_1, s_2 \in S$ such that $\sift_\tau(s_1) = \sift_\tau(s_2)$ we have $\mathcal{L}(s_1) = \mathcal{L}(s_2)$.
\end{itemize}
Now we can optimize the learning algorithm using Kearns and Vazirani classification trees as follows.
Initially, the tree is just a leaf containing all words in $S$, which may be initialized to $\{\eword\}$.
When an $\out$-consistency or $\delta$-consistency defect is found, we have two words $s_1, s_2 \in S$ such that $\xi(s_1) = \xi(s_2) = U$, for some $U \in \Ps S$.
We also have a word $v \in A^*$ such that $\lang(s_1v) \ne \lang(s_2v),$\footnote{%
	For an $\out$-inconsistency, $v = \eword$; on a $\delta$-inconsistency, there is an $a \in A$ such that $\sift_\tau(s_1a) \ne \sift_\tau(s_2a)$.
	We take the label $u$ of the lowest common ancestor node of those two leaves and define $v = au$.
} and we want to use this word to update the tree $\tau$ to distinguish $\xi(s_1)$ and $\xi(s_2)$.
This is done by replacing the leaf with label $U$ by a node that distinguishes based on the word $v$.
Its left subtree is a leaf containing the words $s \in U$ such that $\lang(sv) = 0$ while the $s \in U$ in its right subtree are such that $\lang(sv) = 1$.
This requires new membership queries, but only one for each word in $S \cup S \cdot A$ that sifts into the leaf with label $U$; the observation table approach needs queries for all elements of $S \cup S \cdot A$ when a column is added.

The intention of having $\Ps S$ as the set of labels is that we maintain the trees in such a way that $\xi \colon S \to \Ps S$ maps each word in $S$ to the unique leaf it is contained in.
As a result, the function $\xi$ can be read directly from the tree.
This means that, when adding a word to $S$, we need to add it also to the leaf of the tree that it sifts into.
Words are added to $S$ when processing a counterexample as in \lstar{} and when fixing $\init$-closedness or $\delta$-closedness.
These closedness defects occur when a word in $\{\eword\} \cup S \cdot A$ sifts into an empty set leaf.
In that case we add the word to $S$.

Classification trees were originally developed for \lstar{}, but we note that they can be used in \ID{} as well.
By the abstract nature of Theorem~\ref{thm:hypiso}, no new correctness proof is necessary.
\parspace
\subparagraph{Transporting Optimizations to Minimization and Testing.}
Using our correspondence between learning and minimization, the above optimization for learning algorithms immediately inspires an optimization for the state merging phase of Section~\ref{sec:minimization}.
The main difference is that we sift states of the automaton $Q$ through the tree, rather than words.
That is, when sifting a state $q$ at a node with label $v \in A^*$, the subtree we move to corresponds to whether $v$ is accepted by $q$. Thus, instead of taking the labels of leaves from $\Ps S$, we take them from $\Ps Q$. The algorithm described above now creates a \emph{splitting tree}~\cite{lee1994,smetsers2016} for $Q$.

Interestingly, in this case one does not actually have to represent the trees to perform the algorithm; tracking only the partitioning of $Q$ induced by the tree is enough.
This is because the classification of next states is contained in the classification of states: $\xi_\delta = \xi \circ \delta$ (using Proposition~\ref{prop:shiftwrapper}).
An inconsistency consists in two states being in the same partition, but for some input $a \in A$ the corresponding next states being in different partitions.
One then splits the partition of the two states into one partition for each of the partitions obtained after reading $a$.
This corresponds to (repeatedly\footnote{%
	The partition splitting algorithm resolves every inconsistency for $a$ within the partition at once.
}) updating the tree to split the leaf.
This algorithm that does not keep track of the tree is precisely Moore's reduction procedure~\cite{moore1956}.

The splitting tree algorithm described above can also be plugged into the W-method as it is described in Section~\ref{sec:testing}.
Note that the discussion about the correctness of the testing algorithm is not affected by this.
The resulting algorithm is closely related to the HSI-method~\cite{luo1995}.

\section{Conclusion}\label{sec:conclusion}
We have presented CALF, a categorical automata learning framework.
CALF covers the basic definitions used in automata learning algorithms and unifies correctness proofs for several of them.
We have shown that these proofs extend also to minimization and testing methods. CALF is general enough to capture optimizations for all of these algorithms and provides an abstract umbrella to transfer results between the three areas.
We illustrated how an optimization known in learning can be transported to minimization and testing.
We have also exploited the categorical nature of the framework by changing the category and deriving algorithms for weighted automata in Appendix~\ref{sec:inst}.
This example shows the versatility of the framework: dynamics of weighted automata are naturally presented as coalgebras, and CALF can accommodate this perspective as well as the algebraic one, which is used traditionally for DFAs and adopted in the main text.
\parspace
\subparagraph{Related Work.}
Most of the present paper is based on the first author's Master's thesis~\cite{heerdt2016}.

Other frameworks have been developed to study automata learning algorithms, e.g.~\cite{balcazar1997,isberner2015}.
However, in both cases the results are much less general than ours and not applicable for example to settings where the automata have additional structure, such as weighted automata (Appendix~\ref{sec:inst}).
Isberner~\cite{isberner2015} hints at the connection between algorithms for minimization and learning.
For example, before introducing classification trees for learning, he explains them for minimization.
Still, he does not provide a formal link between the two algorithms.

The relation between learning and testing was first explored by Berg et al.~\cite{berg2005}.
Their discussion, however, is limited to the case where the black-box DFA has at most as many states as the known DFA.
This is because their correspondence is a stronger one that relates terminated learning algorithms to test sets of conformance testing algorithms, whereas we have provided algorithmic connections.
Our Theorem~\ref{thm:testing}, which allows reasoning about algorithms not making the above assumption, is completely new.

As mentioned in the introduction, a preliminary investigation of generalizing automata learning concepts using category theory was performed by Jacobs and Silva~\cite{jacobs2014}.
We were inspired by their work, but we note that they did not attempt to formulate anything as general as our wrappers; definitions are concrete and dependent on observation tables.
As a result, there are no fully abstract proofs and no instantiations to optimizations such as the classification trees discussed in Section~\ref{sec:opt}.
Jacobs and Silva also investigated weighted automata, as we do in Appendix~\ref{sec:inst}, but only for the \lstar{} algorithm.
We note that Bergadano and Varricchio~\cite{bergadano1996} first adapted \lstar{} and \ID{} for weighted automata.
\parspace
\subparagraph{Future Work.}
Many directions for future research are left open.
An interesting idea inspired by the relation between testing and learning is integrating testing algorithms into \lstar{}.
This could lead to optimizations that are unavailable when those components are kept separate.
Further additions to CALF may include an investigation of the minimality of hypotheses and a characterization of the more elementary steps that are used in the algorithms.
The only fully abstract correctness proofs for concrete algorithms at the moment are for the minimization algorithms and \ID{}, where correctness follows from a condition that can be formulated on an abstract level.
We would like to have an abstract characterization of progress for the other algorithms, which are of a more iterative nature.
Finally, a concrete topic is optimizations for automata with additional structure, such as nondeterministic, weighted, and nominal automata.
To the best of our knowledge, no analogue of classification trees exists for learning these classes.
If such analogues turn out to exist, then the correspondences discussed in Section~\ref{sec:minimization} and Section~\ref{sec:testing} would provide optimized learning and testing algorithms as well.

It is our long-term goal to exploit the practical aspects of the framework.
For more details, see our project website \url{http://calf-project.org}.
\parspace
\subparagraph{Acknowledgements.}
We thank Joshua Moerman and a reviewer for useful comments.

\bibliographystyle{plain}
\bibliography{calf}

\appendix

\section{Linear Weighted Automata}\label{sec:inst}
So far we have only considered DFAs as examples of automata.
In this appendix, we exploit the categorical nature of CALF by changing the base category in order to study linear weighted automata.

\begin{wrapfigure}[5]{r}{0.20\textwidth}
	\vspace{-26pt}
	\[
		\hspace{-12pt}
		\begin{tikzcd}[column sep=.3cm,row sep=.12cm,ampersand replacement=\&]
			\mathbb{F} \ar{rd}[swap,pos=.6]{\init_Q} \&
				\&
				\mathbb{F} \\
			\&
				Q \ar{ddd}{\delta_Q} \ar{ru}[swap,pos=.4]{\out_Q} \\
			\\
			\\
			\&
				Q^A
		\end{tikzcd}
	\]
\end{wrapfigure}
Let $\cat{C}$ be the category $\opc{\cat{Vect}}$, the opposite of the category of vector spaces and linear maps over a field $\F$.
We need the opposite category because the dynamics of our automata will be coalgebras rather than the algebras that are found in Definition~\ref{def:aut}.
We interpret everything in $\cat{Vect}$ rather than $\opc{\cat{Vect}}$.
The automata we consider are for $I = Y = \F$ and $F = (-)^A$, where the latter for a finite set $A$ and a vector space $X$ is given by the set of functions $X^A$ with a pointwise vector space structure.
On linear maps $f \colon W \to X$ we have $f^A \colon W^A \to X^A$ given by $f^A(g)(a) = f(g(a))$.
Here an automaton, called a \emph{linear weighted automaton} (LWA), is a vector space $Q$ with linear maps as in the diagram.

Given a set $U$, let $V(U)$ be the free vector space generated by $U$, the elements of which are formal sums $\sum_{j \in J} v_j \times u_j$ for finite sets $J$, $\{v_j\}_{j \in J} \subseteq \F$, and $\{u_j\}_{j \in J} \subseteq U$.
The operation $V(-)$ is a functor $\cat{Set} \to \cat{Vect}$ that assigns to a function $f \colon U \to W$ for sets $U$ and $W$ the linear map $V(f) \colon V(U) \to V(W)$ given by $V(f)(u) = f(u)$.
The functor $V$ is left adjoint to the forgetful functor $\cat{Vect} \to \cat{Set}$ that assigns to each vector space its underlying set.
Given a vector space $W$ and a function $f \colon U \to W$, the \emph{linearization} of $f$ is the linear map $\overline{f} \colon V(U) \to W$ given by $\overline{f}\left(\sum_{j \in J} v_j \times u_j\right) = \sum_{j \in J} v_j \times f(u_j)$.
Conversely, a linear map with domain $V(U)$ is completely determined by its definition on the elements of $U$.
The adjunction implies that this is a bijective correspondence between functions $U \to W$ and linear maps $V(U) \to W$.
Note that $\F$ is isomorphic to $V(1)$, so the initial state map $\init \colon \F \to Q$ of an LWA $Q$ is essentially an element of $Q$, namely $\init(1)$.
Moreover, any linear map is determined by its value on the basis vectors of its domain, which gives us a finite representation for LWAs with a finite-dimensional state space.
This representation is known as a \emph{weighted automaton}.

The initial input system in this setting is essentially the same as for DAs in $\cat{Set}$, but enriched with a free vector space structure.
\begin{proposition}
	The vector space $V(A^*)$ with
	\begin{align*}
		\init \colon \F \to V(A^*) &
			&
			\init(1) = \eword &
			&
			\delta \colon V(A^*) \to V(A^*)^A &
			&
			\delta(u)(a) = ua
	\end{align*}
	forms an initial input system.
\end{proposition}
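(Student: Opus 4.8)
The plan is to show directly that $V(A^*)$, equipped with the stated $\init$ and $\delta$, is initial in the category of input systems: for every input system $Q$ (a vector space with linear maps $\init_Q\colon\F\to Q$ and $\delta_Q\colon Q\to Q^A$) there should be exactly one input-system morphism $\rch_Q\colon V(A^*)\to Q$, i.e.\ a linear map with $\rch_Q\circ\init=\init_Q$ and $\rch_Q^A\circ\delta=\delta_Q\circ\rch_Q$. This is the ``linear'' analogue of the fact, recalled in the main text, that $A^*$ is the initial input system in $\cat{Set}$, so I expect the argument to mirror that one with free vector space structure added on top.

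For existence I would first define a function $r_0\colon A^*\to Q$ by recursion on word length: $r_0(\eword)=\init_Q(1)$ and $r_0(wa)=\delta_Q(r_0(w))(a)$ for $w\in A^*$, $a\in A$. Since $A^*$ is a basis of $V(A^*)$ — equivalently, by the free--forgetful adjunction underlying $V$ — this extends uniquely to a linear map $\rch_Q=\overline{r_0}\colon V(A^*)\to Q$. Then I would check the two morphism equations: $\rch_Q\circ\init=\init_Q$ is immediate since $\init$ sends $1$ to $\eword$ and $\rch_Q(\eword)=\init_Q(1)$; for $\rch_Q^A\circ\delta=\delta_Q\circ\rch_Q$, both sides are linear maps $V(A^*)\to Q^A$, so it suffices to verify the equality on the basis $A^*$, where for a word $w$ it reduces to the identity $\bigl(a\mapsto\rch_Q(wa)\bigr)=\delta_Q(\rch_Q(w))$, which is exactly the recursive clause defining $r_0$.

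For uniqueness, given any input-system morphism $g\colon V(A^*)\to Q$, I would argue by induction on $|w|$ that $g$ agrees with $\rch_Q$ on every basis vector $w\in A^*$: the base case is $g(\eword)=g(\init(1))=\init_Q(1)$, and the step uses the dynamics equation for $g$ at $w$, which forces $g(wa)=\delta_Q(g(w))(a)$ — the same recursion. Agreement on a basis then forces $g=\rch_Q$ by linearity. A slicker alternative would be to note that $\F\cong V(1)$ and, $A$ being finite, $V(X\times A)\cong V(X)^A$ naturally, so the free functor $V$ carries the $\cat{Set}$-functor $1+(-)\times A$ (whose algebras are $\cat{Set}$ input systems) to $\F+(-)^A$ (whose algebras are the input systems considered here); being a left adjoint, $V$ preserves initial algebras, hence sends the initial input system $A^*$ to $V(A^*)$ with precisely the transported structure.

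I do not anticipate a real obstacle. The only points needing care are keeping the action of $(-)^A$ on linear maps straight and the order of concatenation when unwinding $\delta$, and — in the dynamics equation — remembering to reduce to basis elements before computing, since $\rch_Q^A$ is only determined there.
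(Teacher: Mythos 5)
Your proposal is correct and matches the paper's proof: the paper likewise defines $\rch$ on words by $\rch(\eword)=\init_Q(1)$ and $\rch(ua)=\delta_Q(\rch(u))(a)$, extends linearly, and establishes uniqueness by induction on word length. The alternative argument you sketch (the left adjoint $V$ preserving the initial algebra, using $\F\cong V(1)$ and $V(X\times A)\cong V(X)^A$ for finite $A$) is a valid bonus but is not the route the paper takes.
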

\begin{proof}
	Given an input system $Q$ with initial state map $\init_Q \colon V(1) \to Q$ and dynamics $\delta_Q \colon Q \to Q^A$, define a linear map $\rch \colon V(A^*) \to Q$ by $\rch(\eword) = \init_Q(1)$ and $\rch(ua) = \delta_Q(\rch(u))(a)$.
	Note that $\rch$ is an input system homomorphism.
	Any input system homomorphism $h \colon V(A^*) \to Q$ must satisfy $h(\eword) = \init_Q(1)$ $h(ua) = \delta_Q(h(u))(a)$ and is therefore, by induction on the length of words, equal to $\rch$.
\end{proof}
Observability maps are simply as they would be in $\cat{Set}$~\cite[Lemma~7]{jacobs2014}.
Thus, $\obs \colon Q \to \F^{A^*}$ is given by $\obs(q)(\eword) = \out_Q(q)$ and $\obs(q)(au) = \obs(\delta_Q(q)(a))(u)$.
\begin{proposition}
	The vector space $\F^{A^*}$ with
	\begin{align*}
		\out \colon \F^{A^*} \to \F &
			&
			\out(l) = l(\eword) &
			&
			\delta \colon \F^{A^*} \to (\F^{A^*})^A &
			&
			\delta(l)(a)(u) = l(au)
	\end{align*}
	forms a final output system.
\end{proposition}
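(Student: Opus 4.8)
The plan is to run the evident ``final Moore automaton'', or behaviour, argument enriched with a vector space structure, exactly parallel to the preceding proposition on the initial input system and to the $\cat{Set}$ situation in which $2^{A^*}$ is the final output system; indeed the defining formula for $\obs$ has already been recorded just above the statement, so the work is to check that it does the job.

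First I would verify that $\F^{A^*}$ equipped with the stated $\out$ and $\delta$ really is an output system, i.e.\ that both maps are linear. This is immediate: $\F^{A^*}$, and likewise $(\F^{A^*})^A$, carries the pointwise vector space structure, $\out(l) = l(\eword)$ is evaluation at a fixed word, and $\delta(l)(a)(u) = l(au)$ is precomposition with $w \mapsto aw$; all of these operations are computed pointwise and are therefore linear.

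Next, given an arbitrary output system $Q$ with dynamics $\delta_Q \colon Q \to Q^A$ and output $\out_Q \colon Q \to \F$, I would define $\obs_Q \colon Q \to \F^{A^*}$ by the displayed recursion, $\obs_Q(q)(\eword) = \out_Q(q)$ and $\obs_Q(q)(au) = \obs_Q(\delta_Q(q)(a))(u)$, which determines each $\obs_Q(q) \in \F^{A^*}$ by induction on word length. It then remains to check three things. (i) $\obs_Q$ is linear: since linearity of a map into the product $\F^{A^*}$ is tested coordinatewise, one argues by induction on $|w|$ that $q \mapsto \obs_Q(q)(w)$ is linear, the base case being linearity of $\out_Q$ and the step writing $q \mapsto \obs_Q(q)(aw)$ as a composite of $\delta_Q$, evaluation at $a$, and $q' \mapsto \obs_Q(q')(w)$. (ii) $\obs_Q$ is an output system homomorphism: $\out \circ \obs_Q = \out_Q$ is the base clause, and the coalgebra identity $\delta \circ \obs_Q = (\obs_Q)^A \circ \delta_Q$ unfolds precisely to the recursive clause. (iii) $\obs_Q$ is unique: any output system homomorphism $h \colon Q \to \F^{A^*}$ must satisfy $h(q)(\eword) = \out(h(q)) = \out_Q(q)$ and $h(q)(au) = \delta(h(q))(a)(u) = h(\delta_Q(q)(a))(u)$, so $h = \obs_Q$ by induction on word length.

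I expect essentially no obstacle. The one place that needs a moment's care is the linearity in (i): $\obs_Q$ is defined by a recursion, so linearity is not syntactically evident, but the coordinatewise induction above settles it. Everything else is the routine finality argument, dual in shape to the preceding proposition, with the same $\opc{\cat{Vect}}$ bookkeeping and with observability maps behaving exactly as in $\cat{Set}$.
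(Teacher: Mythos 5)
Your proof is correct and is exactly the intended argument: the paper itself states this proposition without proof (deferring to Jacobs--Silva, Lemma~7, and to the displayed recursion for $\obs$), and your verification mirrors the paper's own proof of the companion proposition on the initial input system $V(A^*)$. The one point genuinely needing care --- that the recursively defined $\obs_Q$ is linear, checked coordinatewise by induction on word length --- is handled properly in your step (i).
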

We use the $(\text{surjective linear maps}, \text{injective linear maps})$ factorization system in $\cat{Vect}$.
The details are similar to those in $\cat{Set}$, but now the image of a linear map $f \colon U \to V$ is a subspace of $V$.
To see that the unique diagonals are the same as in $\cat{Set}$, note the following.
\begin{proposition}\label{prop:linearcompletion}
	For vector spaces $U$, $W$, and $X$, if $f \colon U \to W$ and $g \colon U \to X$ are linear maps and $h \colon X \to W$ is a function with $h \circ f = g$ and $f$ surjective, then $h$ is a linear map.
\end{proposition}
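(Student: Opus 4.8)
The plan is to check directly that $h$ preserves addition and scalar multiplication, using surjectivity of $f$ to reduce every statement about $h$ to the corresponding statement about the linear maps $f$ and $g$ via the hypothesis $h \circ f = g$.

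Concretely, I would start by observing that since $f$ is surjective, every vector in the domain of $h$ is of the form $f(u)$ for some $u \in U$; the choice of such a preimage $u$ is not canonical, but the equation $h \circ f = g$ already forces $h(f(u)) = g(u)$ independently of the choice, so there is nothing to verify about well-definedness. For additivity, given $u_1, u_2 \in U$ I would compute $h(f(u_1) + f(u_2)) = h(f(u_1 + u_2)) = g(u_1 + u_2) = g(u_1) + g(u_2) = h(f(u_1)) + h(f(u_2))$, using in turn linearity of $f$, the equation $h \circ f = g$, linearity of $g$, and $h \circ f = g$ once more. For homogeneity, given $\lambda \in \F$ and $u \in U$ I would compute $h(\lambda f(u)) = h(f(\lambda u)) = g(\lambda u) = \lambda\, g(u) = \lambda\, h(f(u))$ by exactly the same four steps. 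Since by surjectivity these two computations cover all vectors in the domain of $h$, this shows $h$ is linear.

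I expect no real obstacle: the argument is a short chase of $h \circ f = g$, and the only point worth keeping in mind is that it is the surjectivity of $f$ together with the already-established equation $h \circ f = g$ — not any prior knowledge about $h$ — that does the work. In the intended application this delivers exactly what is needed: the set-theoretic diagonal fill-ins produced by Proposition~\ref{prop:closedcons} for the $(\text{surjective linear maps}, \text{injective linear maps})$ factorization system on $\cat{Vect}$ are automatically linear, hence coincide with the diagonals of the factorization system.
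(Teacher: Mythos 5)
Your proposal is correct and is essentially the paper's own argument: both proofs use surjectivity of $f$ to write every element of the domain of $h$ as $f(u)$ and then chase $h \circ f = g$ through the linearity of $f$ and $g$. The only cosmetic difference is that the paper verifies preservation of a general finite linear combination $\sum_j v_j \times w_j$ in one computation, whereas you check additivity and homogeneity separately; these are interchangeable.
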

\begin{proof}
	Let $J$ be a finite index set, $\{v_j\}_{j \in J} \subseteq \F$, and $\{w_j\}_{j \in J} \subseteq W$.
	Because $f$ is surjective, we know that for each $j \in J$ there is a $u_j \in U$ such that $f(u_j) = w_j$.
	Then
	\begingroup
	\allowdisplaybreaks
	\begin{align*}
		h\left(\sum_{j \in J} v_j \times w_j\right) &
			= h\left(\sum_{j \in J} v_j \times f(u_j)\right) &
			&
			\text{($f(u_j) = w_j$)} \\
		&
			= h\left(f\left(\sum_{j \in J} v_j \times u_j\right)\right) &
			&
			\text{(linearity of $f$)} \\
		&
			= g\left(\sum_{j \in J} v_j \times u_j\right) &
			&
			\text{($h \circ f = g$)} \\
		&
			= \sum_{j \in J} v_j \times g(u_j) &
			&
			\text{(linearity of $g$)} \\
		&
			= \sum_{j \in J} v_j \times h(f(u_j)) &
			&
			\text{($h \circ f = g$)} \\
		&
			= \sum_{j \in J} v_j \times h(w_j) &
			&
			\text{($f(u_j) = w_j$)}.
		\qedhere
	\end{align*}
	\endgroup
\end{proof}

A language in this setting is a linear map $\lang \colon V(A^*) \to \F$.
One obtains the minimal LWA for the language $\lang$ by taking the image of the observability map $t \colon V(A^*) \to \F^{A^*}$ of the automaton $V(A^*)$ with output map $\lang$.
Note that if we take the initial state map of $\F^{A^*}$ to be $t \circ \init_{V(A^*)}$, then $t$ is an LWA homomorphism.
Its image is an LWA because each category of automata has a factorization system inherited from the base category.
It must be the minimal LWA because by initiality and finality the factorization of $t$ must be $(\rch_M, \obs_M)$.

The dimension of a vector space $U$ is denoted $\dim(U)$.
The kernel of a linear map $f \colon U \to W$ is given by $\ker(f) = \{u \in U \mid f(u) = 0\}$.
It is a standard result that if $U$ is of finite dimension, then
\begin{equation}\label{eq:kerim}
	\dim(U) = \dim(\ker(f)) + \dim(\im(f)).
\end{equation}
Furthermore, if for two vector spaces $U$ and $W$ we have $U \subseteq W$, then $\dim(U) \le \dim(W)$.
If additionally $U \ne W$, then $\dim(U) < \dim(W)$.
For a linear map $f \colon U \to W$, if $U$ is of finite dimension and $\dim(\im(f)) = \dim(U)$, then $\dim(\ker(f)) = 0$ by (\ref{eq:kerim}), implying $f$ is injective.
If $\dim(\im(f)) = \dim(W)$, then $\im(f) = W$, making $f$ surjective.
\parspace
\subparagraph{Learning.}
We recover adaptations of \lstar{} and \ID{} for LWAs as originally developed by Bergadano and Varricchio~\cite{bergadano1996}.
Jacobs and Silva~\cite{jacobs2014} also instantiated their categorical reformulation of \lstar{} to this setting.

In active learning of LWAs, we assume there is a language $\lang \colon V(A^*) \to \F$ such that the state space of the minimal LWA $M$ accepting $\lang$ is of finite dimension.
Given a finite set $E \subseteq A^*$, the function $\res \colon \F^{A^*} \to \F^E$ given by $\res(L)(e) = L(e)$ is a linear map.
To select elements of $V(A^*)$, we may take a finite subset $S \subseteq A^*$ and apply $V$ to the inclusion $\inc \colon S \to A^*$ to obtain a linear map $V(\inc) \colon V(S) \to V(A^*)$ and hence a wrapper
\[
	(\sigma, \pi) = (V(S) \xrightarrow{V(\inc)} V(A^*) \xrightarrow{\rch} M, M \xrightarrow{\obs} \F^{A^*} \xrightarrow{\res} \F^E).
\]
All approximated linear maps are as expected: $\xi \colon V(S) \to \F^E$ is given by $\xi(s)(e) = \lang(se)$, $\xi_\delta \colon V(S) \to (\F^E)^A$ is given by $\xi_\delta(s)(a)(e) = \lang(sae)$, $\xi_\out \colon V(S) \to \F$ is given by $\xi_\out(s) = \lang(s)$, and $\xi_\init \colon \F \to \F^E$ is given by $\xi_\init(1)(e) = \lang(e)$.
Although these maps can be represented in the same kind of observation table used for learning a DFA (except that we have a different output set here), the notions of closedness and consistency are different.
This is because we have to deal with the larger domains of the actual linear maps.
The characterizations of these properties can still be derived from Proposition~\ref{prop:closedcons}, as a result of Proposition~\ref{prop:linearcompletion}.
It follows directly that $\delta$-closedness holds if and only if for all $l \in V(S)$ and $a \in A$ there is an $l' \in V(S)$ such that $\xi(l') = \xi_\delta(l)(a)$.
To simplify this, we have the following result.
\begin{proposition}
	If $J$ is a finite set and $\{v_j\}_{j \in J} \subseteq \F$, $\{s_j\}_{j \in J} \subseteq S$, and $\{l_j\}_{j \in J} \subseteq V(S)$ are such that $\xi(l_j) = \xi_\delta(s_j)(a)$ for all $j \in J$, then $\xi\left(\sum_{j \in J} v_j \times l_j\right) = \xi_\delta\left(\sum_{j \in J} v_j \times s_j\right)(a)$.
\end{proposition}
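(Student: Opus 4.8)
The plan is to derive this directly from linearity of the approximated maps, since nothing beyond the vector space structure is at play. First I would recall that $\xi \colon V(S) \to \F^E$ is a linear map, as established just before the proposition. The only point that is not completely immediate is that for each fixed $a \in A$ the assignment $l \mapsto \xi_\delta(l)(a)$ is again a linear map $V(S) \to \F^E$: this holds because $(-)^A$ equips $X^A$ with the pointwise vector space structure, so evaluation at $a$ is a linear map $(\F^E)^A \to \F^E$, and $l \mapsto \xi_\delta(l)(a)$ is the composite of $\xi_\delta$ with that evaluation map, hence linear.

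With both maps identified as linear, the remaining computation is routine. Interpreting each $s_j$ as the basis vector of $V(S)$ it spans, linearity of $\xi$ gives $\xi\bigl(\sum_{j \in J} v_j \times l_j\bigr) = \sum_{j \in J} v_j \times \xi(l_j)$; substituting the hypothesis $\xi(l_j) = \xi_\delta(s_j)(a)$ rewrites this as $\sum_{j \in J} v_j \times \xi_\delta(s_j)(a)$; and linearity of $l \mapsto \xi_\delta(l)(a)$ folds this back into $\xi_\delta\bigl(\sum_{j \in J} v_j \times s_j\bigr)(a)$, which is exactly the claimed equality. I would present this as a short chain of equalities annotated by the three facts just used.

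I expect no genuine obstacle. The only things requiring a moment of care are making explicit that evaluation at $a \in A$ is linear, so that finite sums commute with $\xi_\delta(-)(a)$, and being precise about the implicit identification of each word $s_j \in S$ with the corresponding basis vector of $V(S)$, so that $\sum_{j \in J} v_j \times s_j$ genuinely names an element of $V(S)$ to which $\xi_\delta$ may be applied.
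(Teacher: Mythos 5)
Your proposal is correct and follows essentially the same route as the paper: a three-step chain of equalities using linearity of $\xi$, the hypothesis $\xi(l_j) = \xi_\delta(s_j)(a)$, and the pointwise vector space structure on $(\F^E)^A$ combined with linearity of $\xi_\delta$ (which the paper records as two separate annotated steps and you package as linearity of $l \mapsto \xi_\delta(l)(a)$). No gaps.
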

\begin{proof}
	We have
	\begingroup
	\allowdisplaybreaks
	\begin{align*}
		\xi\left(\sum_{j \in J} v_j \times l_j\right) &
			= \sum_{j \in J} v_j \times \xi(l_j) &
			&
			\text{(linearity of $\xi$)} \\
		&
			= \sum_{j \in J} v_j \times \xi_\delta(s_j)(a) &
			&
			\text{($\xi(l_j) = \xi_\delta(s_j)(a)$)} \\
		&
			= \left(\sum_{j \in J} v_j \times \xi_\delta(s_j)\right)(a) &
			&
			\text{(pointwise vector space structure)} \\
		&
			= \xi_\delta\left(\sum_{j \in J} v_j \times s_j\right)(a) &
			&
			\text{(linearity of $\xi_\delta$)}.
		\qedhere
	\end{align*}
	\endgroup
\end{proof}
Thus, $\delta$-closedness really says that for all $sa \in S \cdot A$ there must be an $l \in V(S)$ such that $\xi(l) = \xi_\delta(s)(a)$.
As for $\delta$-consistency, this property requires that for all $l_1, l_2 \in V(S)$ such that $\xi(l_1) = \xi(l_2)$ we must have $\xi_\delta(l_1)(a) = \xi_\delta(l_2)(a)$ for all $a \in A$.
Using subtraction, this says that for all $l \in V(S)$ with $\xi(l) = \mathbf{0}$ we must have $\xi_\delta(l)(a) = \mathbf{0}$ for all $a \in A$, where $\mathbf{0} \colon E \to \F$ is given by $\mathbf{0}(e) = 0$ for all $e \in E$.

Analogously, $\init$-closedness requires there to be an $l \in V(S)$ such that $\xi(l) = \xi_\init(1)$ while $\out$-consistency states that for all $l \in V(S)$ such that $\xi(l) = \mathbf{0}$ we must have $\lang(l) = 0$.

Closedness can be determined simply by solving systems of linear equations.
Consistency is less trivial.
Define the \emph{transpose} of a table for the language $\lang$ given by $S, E \subseteq A^*$ as the table with row labels $\rev(E)$ and column labels $\rev(S)$ for the language $\rev(\lang)$, where $\rev$ reverses all words in a language.
We will show that consistency of a table can be ensured by ensuring closedness of the transposed table.
This is essentially what Bergadano and Varricchio~\cite{bergadano1996} do.
\begin{proposition}
	If the transpose of a table is $\delta$-closed, then that table is $\delta$-consistent.
\end{proposition}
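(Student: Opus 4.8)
The plan is to make both properties explicit as statements about linear combinations of the table's entries and then to chain them. First I would unwind $\delta$-closedness of the transposed table. That table has rows indexed by $\rev(E)$, columns indexed by $\rev(S)$, and language $\rev(\lang)$, so using $\rev(uv) = \rev(v)\rev(u)$, $\rev(\rev(w)) = w$, and $\rev(a) = a$ for a single letter $a$, its associated maps satisfy $\xi'(\rev(e))(\rev(s)) = \rev(\lang)(\rev(e)\rev(s)) = \lang(se)$ and $\xi'_\delta(\rev(e))(a)(\rev(s)) = \rev(\lang)(\rev(e)\,a\,\rev(s)) = \lang(sae)$. Applying to this table the simplification of $\delta$-closedness derived above, the transposed table is $\delta$-closed exactly when: for every $e \in E$ and $a \in A$ there is some $l' = \sum_{j} v_j \times \rev(e_j) \in V(\rev(E))$ with $\xi'(l') = \xi'_\delta(\rev(e))(a)$. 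Evaluating both sides at each $\rev(s)$, $s \in S$, and using linearity of $\xi'$, this amounts to saying that for every $e \in E$ and $a \in A$ there are scalars $v_j$ and words $e_j \in E$ with
\[
	\lang(sae) = \sum_{j} v_j \times \lang(se_j) \qquad \text{for all } s \in S.
\]

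Next I would invoke the characterization of $\delta$-consistency recalled above: the original table is $\delta$-consistent precisely when every $l \in V(S)$ with $\xi(l) = \mathbf{0}$ satisfies $\xi_\delta(l)(a) = \mathbf{0}$ for all $a \in A$. So let $l = \sum_{s} w_s \times s \in V(S)$ be such that $\xi(l) = \mathbf{0}$, i.e., $\sum_s w_s \times \lang(se) = 0$ for every $e \in E$; fix $a \in A$ and $e \in E$ and take scalars $v_j$ and words $e_j \in E$ as furnished by $\delta$-closedness of the transpose for this pair. Then
\[
	\xi_\delta(l)(a)(e) = \sum_s w_s \times \lang(sae) = \sum_s w_s \sum_j v_j \times \lang(se_j) = \sum_j v_j \sum_s w_s \times \lang(se_j) = \sum_j v_j \times 0 = 0,
\]
where the crucial step is $\sum_s w_s \times \lang(se_j) = \xi(l)(e_j) = 0$, which is valid precisely because each $e_j$ lies in $E$. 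Since $e \in E$ was arbitrary we conclude $\xi_\delta(l)(a) = \mathbf{0}$, and since $a \in A$ was arbitrary this is exactly $\delta$-consistency.

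The only step I expect to require care is the first one: one must track the reversals closely in order to be sure that $\delta$-closedness of the transposed table really delivers a linear combination indexed by words of $E$ (not of $S$), since it is exactly the membership $e_j \in E$ that makes the computation in the second step go through. After that, the argument is the short linear-algebra manipulation displayed above. Conceptually, the content is transpose duality: the matrix of $\xi'$ is the transpose of that of $\xi$, and likewise $\xi'_\delta$ against $\xi_\delta$, so $\delta$-consistency --- read as $\ker(\xi)$ being contained in the kernel of $l \mapsto \xi_\delta(l)(a)$ --- is dual to $\delta$-closedness of the transpose --- read as the image of $l \mapsto \xi'_\delta(l)(a)$ being contained in the image of $\xi'$; but I would prefer the direct computation so as to avoid setting up the dualising identifications.
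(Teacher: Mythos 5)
Your proof is correct and follows essentially the same route as the paper's: both extract from $\delta$-closedness of the transpose, for each $a \in A$ and $e \in E$, scalars $v_j$ and words $e_j \in E$ with $\lang(sae) = \sum_j v_j \times \lang(se_j)$ for all $s \in S$, and then expand an arbitrary $l \in V(S)$ with $\xi(l) = \mathbf{0}$, swap the two sums, and conclude $\xi_\delta(l)(a)(e) = \sum_j v_j \times \xi(l)(e_j) = 0$. The only difference is that you make the reversal bookkeeping explicit, which the paper leaves implicit.
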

\begin{proof}
	Suppose there is an $l \in V(S)$ such that $\xi(l) = \mathbf{0}$.
	For all $a \in A$ and $e \in E$ there are by closedness of the transposed table a finite set $J$, $\{v_j\}_{j \in J} \subseteq \F$, and $\{e_j\}_{j \in J} \subseteq E$ such that for every $s \in S$,
	\begin{equation}\label{eq:trcl}
		\xi_\delta(s)(a)(e) = \sum_{j \in J} v_j \times \xi(s)(e_j).
	\end{equation}
	Let $K$ be a finite set and $\{v_k'\}_{k \in K} \subseteq \F$ and $\{s_k\}_{k \in K} \subseteq S$ such that $l = \sum_{k \in K} v_k' \times s_k$.
	Then
	\begingroup
	\allowdisplaybreaks
	\begin{align*}
		\xi_\delta(l)(a)(e) &
			= \xi_\delta\left(\sum_{k \in K} v_k' \times s_k\right)(a)(e) &
			&
			\text{(expanded form of $l$)} \\
		&
			= \left(\sum_{k \in K} v_k' \times \xi_\delta(s_k)\right)(a)(e) &
			&
			\text{(linearity of $\xi_\delta$)} \\
		&
			= \sum_{k \in K} v_k' \times \xi_\delta(s_k)(a)(e) &
			&
			\text{(pointwise vector space structure)} \\
		&
			= \sum_{k \in K} v_k' \times \sum_{j \in J} v_j \times \xi(s_k)(e_j) &
			&
			\text{(\ref{eq:trcl})} \\
		&
			= \sum_{j \in J} v_j \times \sum_{k \in K} v_k' \times \xi(s_k)(e_j) \\
		&
			= \sum_{j \in J} v_j \times \left(\sum_{k \in K} v_k' \times \xi(s_k)\right)(e_j) &
			&
			\text{(pointwise vector space structure)} \\
		&
			= \sum_{j \in J} v_j \times \xi\left(\sum_{k \in K} v_k' \times s_k\right)(e_j) &
			&
			\text{(linearity of $\xi$)} \\
		&
			= \sum_{j \in J} v_j \times \xi(l)(e_j) &
			&
			\text{(expanded form of $l$)} \\
		&
			= \sum_{j \in J} v_j \times 0 &
			&
			\text{($\xi(l) = \mathbf{0}$)} \\
		&
			= 0. &
			&
			\qedhere
	\end{align*}
	\endgroup
\end{proof}
Analogously, one can show that $\init$-closedness of the transposed table implies $\init$-consistency of the original one.

If a closedness defect is found, we have a word $u \in A^*$ such that $(\res \circ \obs_M \circ \rch_M)(u)$ is not a linear combination of any row indexed by words from $S$.
Because we have subtraction and scalar division, this also means that each of the rows indexed by $S$ that is not a linear combination of other rows is also not a linear combination of other rows and $(\res \circ \obs_M \circ \rch_M)(u)$.
Therefore, adding $u$ to $S$ increases the dimension of the hypothesis.
The dimension of the hypothesis cannot exceed the dimension of the target $M$ (see Appendix~\ref{sec:smaller}), which is of finite dimension, so this process must terminate.

Note that the top part of a table in this setting can be seen as a matrix, and that transposing the table transposes this matrix.
By fixing a closedness defect in the transpose of the table, the dimension of its hypothesis increases.
This dimension is precisely the column rank of the original matrix.
Since the column and row ranks of a matrix are equal, the dimension of the hypothesis of the original table must have increased.

The map $\sigma \colon V(S) \to M$ is surjective just if for each $m \in M$ there is an $l \in V(S)$ such that $\rch_M(l) = m$.
Equivalently, the set $\{\rch_M(s) \mid s \in S\}$ needs to span $M$.
Note that this means the set needs to include one of the (finite) bases of $M$.
Thus, we have an adaptation of \ID{} for LWAs that assumes to be given such a finite set $S$ and enforces $\out$-consistency and $\delta$-consistency to obtain, by Theorem~\ref{thm:hypiso}(\ref{hypiso1}), a hypothesis isomorphic to $M$.

As for \lstar{}, Proposition~\ref{prop:lstar} carries over almost immediately.
Due to the similarity, we omit the proof.
\begin{proposition}
	If for every prefix $p$ of a word $z \in A^*$ there exists an $l \in V(S)$ such that $\rch_M(l) = \rch_M(p)$, then $\lang_H(z) = \lang_M(z)$.
\end{proposition}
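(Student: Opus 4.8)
The plan is to replay the proof of Proposition~\ref{prop:lstar} almost verbatim, working with the concrete linear maps introduced in this appendix and replacing individual words $s_i \in S$ by vectors $l_i \in V(S)$. First I would let $(V(S) \xrightarrow{e} H, H \xrightarrow{m} \F^E)$ be the minimization of the wrapper, so that $\xi = m \circ e$ with $m$ injective. Writing $z = a_1 a_2 \cdots a_n$ for $n$ the length of $z$ and each $a_i \in A$, and setting $z_i = a_1 \cdots a_i$ for $0 \le i \le n$ (so $z_0 = \eword$), every prefix of $z$ is some $z_i$; the hypothesis then supplies, for each $i$, an $l_i \in V(S)$ with $\rch_M(l_i) = \rch_M(z_i)$, where as in the statement $\rch_M$ is applied to the basis vector of $V(A^*)$ named by the word (precomposed with $V(\inc)$ on the $V(S)$ side).

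The core is an induction on $i$ showing $\xi(l_i) = (m \circ \rch_H)(z_i)$. For the base case $i = 0$ I would chase $\xi(l_0) = (\res \circ \obs_M \circ \rch_M)(l_0) = (\res \circ \obs_M \circ \rch_M)(\eword) = (\res \circ \obs_M \circ \init_M)(1) = \xi_\init(1) = (m \circ \init_H)(1) = (m \circ \rch_H)(\eword)$, using the definitions of $\xi$, of $\rch_M$, of $\xi_\init$, of $\init_H$, and of $\rch_H$ respectively. For the inductive step, from $\xi(l_i) = (m \circ \rch_H)(z_i)$ and injectivity of $m$ (together with $\xi = m \circ e$) I get $e(l_i) = \rch_H(z_i)$, and then I would compute, using $z_{i+1} = z_i a_{i+1}$ and the coalgebraic defining property $\rch_M(ua) = \delta_M(\rch_M(u))(a)$ of the reachability map, the chain $\xi(l_{i+1}) = (\res \circ \obs_M \circ \rch_M)(z_{i+1}) = (\res \circ \obs_M)\bigl(\delta_M(\rch_M(z_i))(a_{i+1})\bigr) = (\res \circ \obs_M)\bigl(\delta_M(\rch_M(l_i))(a_{i+1})\bigr) = \xi_\delta(l_i)(a_{i+1}) = \cons_\delta(e(l_i))(a_{i+1}) = \cons_\delta(\rch_H(z_i))(a_{i+1}) = (m \circ \delta_H)(\rch_H(z_i))(a_{i+1}) = (m \circ \rch_H)(z_{i+1})$. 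Here the passage to $\xi_\delta$ is the instance of Proposition~\ref{prop:shiftwrapper} recorded in this appendix, the identities $\cons_\delta = m \circ \theta_\delta$ and $\delta_H = \theta_\delta$ come from Theorem~\ref{thm:dynamical}, and the last step is the definition of $\rch_H$.

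Taking $i = n$ gives $(m \circ e)(l_n) = \xi(l_n) = (m \circ \rch_H)(z)$, hence $e(l_n) = \rch_H(z)$ by injectivity of $m$. I would then conclude with $\lang_H(z) = (\out_H \circ \rch_H)(z) = (\out_H \circ e)(l_n) = \xi_\out(l_n) = (\out_M \circ \rch_M)(l_n) = (\out_M \circ \rch_M)(z) = \lang_M(z)$, using $\out_H \circ e = \xi_\out$ (the consistency witness for the constant functor $\F$ that defines $\out_H$), the description $\xi_\out(l) = \lang(l) = (\out_M \circ \rch_M)(l)$ from this appendix, and $\rch_M(l_n) = \rch_M(z)$. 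I do not expect a genuine obstacle: every individual step is a categorical diagram chase already carried out in Section~\ref{sec:abstract} and Section~\ref{sec:applications}, so it is insensitive to the move to $\opc{\cat{Vect}}$. The only points needing a little care are bookkeeping the direction of the dynamics — $\delta_M$ is now a coalgebra $M \to M^A$ — and observing that nowhere does the argument require $l_i$ to be a single word: it only ever uses $l_i \in V(S)$ with $\rch_M(l_i) = \rch_M(z_i)$, which is exactly why the statement quantifies over $V(S)$.
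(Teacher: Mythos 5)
Your proposal is correct and matches the paper's intent exactly: the paper itself omits this proof, stating only that Proposition~\ref{prop:lstar} ``carries over almost immediately,'' and your argument is precisely that transfer---the same induction on prefixes, with words $s_i \in S$ replaced by vectors $l_i \in V(S)$ and the dynamics read coalgebraically as $\rch_M(ua) = \delta_M(\rch_M(u))(a)$. Every step you list mirrors the corresponding step in the paper's proof of Proposition~\ref{prop:lstar}, so there is nothing to add.
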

\begin{corollary}
	If $z \in A^*$ is a counterexample, i.e., $\lang_H(z) \ne \lang_M(z)$, then adding all prefixes of $z$ to $S$ will increase the dimension of $\im(\rch_M \circ \inc)$.
\end{corollary}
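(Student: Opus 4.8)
The plan is to deduce the statement by contraposition from the preceding proposition, combined with the elementary facts about dimensions of nested subspaces recalled earlier in this appendix. Write $\inc \colon S \to A^*$ for the current inclusion and $\inc' \colon S' \to A^*$ for the inclusion of the enlarged set $S' = S \cup \{p \in A^* \mid p \text{ is a prefix of } z\}$. Here $\im(\rch_M \circ \inc)$ is the image of the linear map $\rch_M \circ V(\inc) \colon V(S) \to M$, i.e.\ the linear span of $\{\rch_M(s) \mid s \in S\}$ inside $M$; the goal is to show this subspace is properly contained in $\im(\rch_M \circ \inc')$, whence its dimension strictly increases.

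First I would apply the contrapositive of the preceding proposition: since $z$ is a counterexample we have $\lang_H(z) \ne \lang_M(z)$, so the conclusion of that proposition fails, that is, there is a prefix $p$ of $z$ for which no $l \in V(S)$ satisfies $\rch_M(l) = \rch_M(p)$. Since $\rch_M$ is linear, the set $\{\rch_M(l) \mid l \in V(S)\}$ is precisely the subspace $\im(\rch_M \circ \inc)$ of $M$, so $\rch_M(p) \notin \im(\rch_M \circ \inc)$. On the other hand $p \in S'$, hence $\rch_M(p) \in \im(\rch_M \circ \inc')$, and $S \subseteq S'$ gives $\im(\rch_M \circ \inc) \subseteq \im(\rch_M \circ \inc')$. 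Combining these three facts, this last inclusion is proper.

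To finish I would invoke finite-dimensionality: $\im(\rch_M \circ \inc')$ is a subspace of $M$, which has finite dimension by the standing assumption in this appendix, so both images are finite-dimensional, and a proper inclusion of finite-dimensional vector spaces forces $\dim(\im(\rch_M \circ \inc)) < \dim(\im(\rch_M \circ \inc'))$, as recalled above. This is exactly the claimed increase in dimension.

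There is no serious obstacle: the argument is the linear-algebraic counterpart of the (omitted) DFA corollary, where a proper inclusion of reachable-state sets yields a strict increase in cardinality. The only points needing a little care are identifying the set $\{\rch_M(l) \mid l \in V(S)\}$ occurring in the preceding proposition with the subspace $\im(\rch_M \circ \inc)$ occurring in the corollary --- remembering that linearity of $\rch_M$ on the free vector space $V(S)$ makes this image a span --- and using finite-dimensionality of $M$ to pass from a proper inclusion to a strict drop in dimension rather than merely $\dim(\im(\rch_M \circ \inc)) \le \dim(\im(\rch_M \circ \inc'))$.
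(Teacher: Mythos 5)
Your argument is correct and is precisely the (implicit) reasoning the paper intends: the corollary is stated without proof as an immediate consequence of the preceding proposition, read contrapositively, together with the fact recalled earlier in the appendix that a proper inclusion of finite-dimensional subspaces strictly increases dimension. Your care in identifying $\{\rch_M(l) \mid l \in V(S)\}$ with the subspace $\im(\rch_M \circ V(\inc))$ is exactly the right bookkeeping and introduces no gap.
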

Once the dimensions of $\im(\rch_M \circ \inc)$ and $M$ coincide, $\rch_M \circ \inc$ is surjective and we can apply Theorem~\ref{thm:hypiso}(\ref{hypiso1}).
Thus, the number of required counterexamples is bounded by the dimension of $M$.
\parspace
\subparagraph{Reachability Analysis.}
Let $Q$ be a finite-dimensional LWA and $R$ its reachable part.
Given a finite subset $S \subseteq R$, we have a wrapper $(V(S) \xrightarrow{\overline{\sigma}} R, R \xrightarrow{\pi} Q)$ in $\cat{Vect}$, where $\sigma$ and $\pi$ are the inclusions.
This wrapper is $\init$-closed if there is $l \in V(S)$ such that $\xi(l) = \init_Q(1)$, which can be satisfied by initializing $S = \{\eword\}$.
The wrapper is $\delta$-closed if for all $s \in S$ and $a \in A$ there is an $l \in V(S)$ such that $\xi(l) = \delta_Q(\xi(s))(a)$.
If for some $sa$ this is not the case, we add it to $S$.
As in learning, the dimension of the hypothesis increases by doing so, and therefore the process terminates.
Using Theorem~\ref{thm:hypiso}(\ref{hypiso2}), the final hypothesis is isomorphic to $R$.
\parspace
\subparagraph{State Merging.}
Given a finite-dimensional LWA $Q$, we have a factorization of the observability map $\obs_Q$ as in Section~\ref{sec:minimization}---$Q \overset{h}{\twoheadrightarrow} O \overset{\obs}{\rightarrowtail} \F^{A^*}$---yielding an observable equivalent LWA $O$.
Fixing a finite set $E \subseteq A^*$, we have a wrapper $(Q \xrightarrow{h} O, O \xrightarrow{\obs} \F^{A^*} \xrightarrow{\res} \F^E)$.
The linear map $\xi \colon Q \to \F^E$ gives the output of each state on the words in $E$.
The wrapper is $\out$-consistent if for all $q_1, q_2 \in Q$ such that $\xi(q_1) = \xi(q_2)$ we have $\out(q_1) = \out(q_2)$.
If the last equation fails, we may add $\eword$ to $E$ to distinguish $\xi(q_1)$ and $\xi(q_2)$.
The wrapper is $\delta$-consistent if for all $q_1, q_2 \in Q$ and $a \in A$ such that $\xi(q_1) = \xi(q_2)$ we have $\xi(\delta(q_1)(a)) = \xi(\delta(q_2)(a))$.
If this last equation fails on some $e \in E$, we may add $ae$ to $E$ to distinguish $\xi(q_1)$ and $\xi(q_2)$.
As in learning, this decreases the dimension of the kernel of $\xi$, which guarantees termination.
Using Theorem~\ref{thm:hypiso}(\ref{hypiso1}), the final hypothesis is isomorphic to $O$.
\parspace
\subparagraph{Testing.}
Consider a known finite-dimensional LWA $X$ and an unknown finite-dimensional LWA $Z$, both of which are minimal.
Using the minimization algorithms (but with $S$ for the reachability analysis a subset of $A^*$), we can find finite $S, E \subseteq A^*$ such that $\eword \in S$ and the wrapper $w_X = (\sigma_X, \pi_X) = (V(S) \xrightarrow{V(\inc)} V(A^*) \xrightarrow{\rch} X, X \xrightarrow{\obs} \F^{A^*} \xrightarrow{\res} \F^E)$ satisfies $\sigma_X \in \mathcal{E}$ and $\pi_X \in \mathcal{M}$.
Define $w_Z = (\sigma_Z, \pi_Z)$ analogously.
If at this point the equalities
\begin{align}\label{eq:scl}
	\xi^{w_X} &
		= \xi^{w_Z} &
		\xi^{w_X}_{\init_X} &
		= \xi^{w_Z}_{\init_Z} &
		\xi^{w_X}_{\delta_X} &
		= \xi^{w_Z}_{\delta_Z} &
		\xi^{w_X}_{\out_X} &
		= \xi^{w_Z}_{\out_Z}.
\end{align}
given by Theorem~\ref{thm:testing} do not hold, we can conclude that $X$ and $Z$ accept different languages.
Assume these equalities do hold.
By Corollary~\ref{cor:hypiso} this implies $\dim(\im(\xi^{w_Z})) = \dim(U)$.
Assuming a given upperbound $n$ on the dimension of $Z$, updating $S$ to $S \cdot A^{\le n - \dim(X)}$ yields $\sigma_Z \in \mathcal{E}$.
Applying Theorem~\ref{thm:testing}, we can find out whether $X$ and $Z$ are isomorphic by testing (\ref{eq:scl}) for the updated wrappers.
That is, we have an adaptation of the W-method for LWAs.

\section{The Hypothesis is Smaller than the Target}\label{sec:smaller}
\begin{wrapfigure}{r}{.34\textwidth}
	\vspace{-25pt}
	\[
		\hspace{-10pt}\begin{tikzcd}[column sep=.4cm,row sep=.3cm]
			S \ar{rr}{\sigma} \ar[two heads]{rd} &
				&
				T \ar{rr}{\pi} \ar[two heads]{rd} &
				&
				P \\
			&
				J \ar[two heads]{rd} \ar[tail]{ru} &
				&
				K \ar[tail]{ru} \\
			&
				&
				H \ar[tail]{ru}
		\end{tikzcd}
	\]
	\vspace{-30pt}
\end{wrapfigure}
Given a wrapper $(S \xrightarrow{\sigma} T, T \xrightarrow{\pi} P)$ in an arbitrary category $\cat{C}$ with an $(\mathcal{E}, \mathcal{M})$ factorization system, the hypothesis can be defined using three successive factorizations, as shown on the right.
We now explain for the discussed concrete settings why the hypothesis is always ``smaller'' than the target, for the appropriate notion of size in each category.
\parspace
\subparagraph{In the Category of Sets.}
We know that if $T$ is a finite set, then $J$, being a subset of $T$, is a finite set with $|J| \le |T|$.
Because there is a surjective function $J \to H$, we conclude that $H$ is finite and $|H| \le |J| \le |T|$.
\parspace
\subparagraph{In the Category of Vector Spaces.}
Assume $T$ is of finite dimension.
Since $J$ is a subspace of $T$, we have that $J$ is a finite-dimensional vector space with $\dim(J) \le \dim(T)$.
Note that there exists a surjective linear map $x \colon J \to H$, so $H$ is finite-dimensional and $\dim(\im(x)) = \dim(H)$.
It then follows from (\ref{eq:kerim}) in Appendix~\ref{sec:inst} that $\dim(\ker(x)) = \dim(J) - \dim(H)$, so we must have $\dim(H) \le \dim(J) \le \dim(T)$.

\end{document}